\documentclass{article}

\usepackage[main, final]{neurips_2026}

\usepackage[utf8]{inputenc} 
\usepackage[T1]{fontenc}    
\usepackage{hyperref}       
\usepackage{url}            
\usepackage{booktabs}       
\usepackage{amsfonts}       
\usepackage{nicefrac}       
\usepackage{microtype}      
\usepackage{xcolor}         

\usepackage{graphicx}
\usepackage{amsmath,amssymb,amsthm}
\usepackage{multirow}
\usepackage{subcaption}
\usepackage{wrapfig}
\usepackage[table]{xcolor}
\usepackage{wrapfig}
\newtheorem{lemma}{Lemma}
\newtheorem{theorem}{Theorem}
\usepackage{pifont}

\title{Mind the Gap No More: Achieving Zero-Gap Multimodal Integration via One Tokenizer}

\author{%
Yanan Li$^{1}$\thanks{Equal contribution. $\dagger$: Corresponding author.}, \hspace{0.5mm} Christina Yi Jin$^{1*}$, Yuan Jin$^{1*}$, Manli Luo$^{1*}$, Tie Xu$^{1*}$, Shuai Jiao$^2$, Wei He$^1$ \\
\And Qing Zhang$^{1\dagger}$\\
$^1$Research Center for Frontier Fundamental Studies, Zhejiang Lab \\
$^2$Research Center for Scientific Data Hub, Zhejiang Lab\\
\texttt{\{liyn,cyj,\hspace{0.5mm}jin.yuan,\hspace{0.5mm}lml0502,fexutie,\hspace{0.5mm}jiaoshuai,\hspace{0.5mm}hewei\}@zhejianglab.org} \\
\texttt{qing.zhang@zhejianglab.org}
}

\begin{document}

\maketitle

\begin{abstract}
A central challenge in developing Multimodal Large Language Models (MLLMs) is effectively integrating heterogeneous inputs into a cohesive reasoning engine. Current paradigms predominantly rely on modular architectures that introduce modality-specific encoders and cross-modal fusion mechanisms. However, these designs are fundamentally bottlenecked by a geometric modality gap, forcing the LLM to expend significant computational capacity on geometric reconciliation rather than deep cross-modal reasoning. In this work, we formally characterize this modality gap and theoretically demonstrate that native architectures, specifically those employing a unified vocabulary, intrinsically maintain a zero-gap state across all hidden layers. 
Guided by these theoretical findings, we propose \textit{One Tokenizer}, a native architecture that maps all modalities directly into a shared token space. We empirically validate this framework on a DNA--text multimodal testbed. Our extensive evaluations reveal that by achieving seamless integration within the LLM's native latent space, One Tokenizer consistently outperforms encoder-based modular counterparts, providing a fundamentally superior framework for deep biological reasoning.

\end{abstract}
\section{Introduction}
\label{sec:intro}

Extending large language models (LLMs) to process heterogeneous modalities, ranging from continuous visual signals to discrete biological sequences, has become a central focus in foundation model research \cite{wang2024qwen2,wang2024emu3,yin2024survey,team2025kimi}. To manage this modality heterogeneity, current multimodal large language models (MLLMs) predominantly rely on a modular architecture. In this standard paradigm, modality-specific encoders (e.g., vision or DNA encoders) are employed to extract heterogeneous representations, which are subsequently mapped into the LLM's shared latent space via cross-modal projectors. While this "plug-and-play" late-fusion approach has facilitated rapid multimodal scaling and demonstrated strong performance on downstream tasks, it inherently postpones deep cross-modal interactions, maintaining a strict representational boundary between modalities prior to their forced alignment. This separation leads to a well-known problem: the \textit{modality gap} \cite{liang2022mind}.

Although the modular design is often viewed as a practical engineering compromise, the resulting modality gap creates a fundamental bottleneck for cross modal reasoning. Specifically, because modality-specific encoders are optimized independently, they naturally embed data into isolated sub-manifolds. 
This creates a fundamental dilemma for modular architectures: heuristically forcing these manifolds to overlap often leads to representation collapse, yet leaving them separated requires the LLM's cross-modal attention to compute similarity across geometrically disjoint spaces, introducing structural biases that obscure fine-grained semantic relevance. To overcome this limitation, a growing line of recent foundation models (e.g., Chameleon \cite{team2024chameleon}, Fuyu \cite{fuyu-8b}, Emu3 \cite{wang2024emu3}) has shifted towards a unified-vocabulary paradigm, abandoning external encoders entirely by discretizing non-textual signals directly into native tokens. Despite its remarkable empirical scalability, the geometric mechanism underlying this architecture remains largely unexplored. Motivated by these observations, in this paper we seek to provide a theoretical understanding of this architectural shift. We argue a simple principle: for LLMs to perform effective cross-modal reasoning, their representations should be gap-free \textit{by design}, not by \textit{forced alignment}.

To formalize this principle, we introduce a statistical framework to analyze the representation dynamics of multimodal architectures. We first examine the embedding initialization (Lemma \ref{lemma:main_init_gap}). We show that modular dual encoders, operating in independent representation spaces, inevitably suffer from a strictly positive initial gap due to high dimensional geometry. In contrast, under the native architecture, all modalities share a unified symbolic vocabulary and a single underlying distribution, driving the expected initial gap to zero as the vocabulary size increases. Furthermore, we demonstrate that this initial geometric state determines the behavior of deep representations (Theorem \ref{thm:main_deep_gap}). Given the Lipschitz continuity of LLM layers, initial geometric distances cannot be easily compressed through depth. Consequently, the initial gap in modular designs is strictly preserved throughout the forward pass, imposing a persistent computational burden as the network struggles to force alignment.  By contrast, our native gap-free initialization guarantees an intrinsically integrated representation space at all depth, eliminating the need for post hoc geometric alignment. 

Empirically validating this claim is challenging because tokenizing continuous data (like images) introduces noise that blurs the architectural comparison. While our gap-free guaranty applies to any modality, observing it in practice requires a clean testbed. We therefore focus on inherently discrete modalities: DNA and text. This allows us to bypass complex external encoders and represent DNA $k$-mers directly as native LLM tokens. Our results on DNA-text benchmarks are clear: using the same foundation model and training data, the native architecture consistently outperforms modular baselines, even when the latter are enhanced with CLIP-style contrastive alignment. This suggests that for discrete sequences, native integration is a simpler and more powerful route than building complex alignment modules. We summarize our main contributions as follows:
\begin{itemize}
\item \textbf{Theoretical guarantee.} We mathematically prove that our native architecture achieves and strictly preserves a zero gap state across layers, fundamentally eliminating the geometric bottleneck that plagues modular designs.
\item \textbf{Controlled evaluation.} We establish a noise free testbed using inherently discrete modalities (DNA and text) to isolate an architecture's reasoning capacity from the tokenization artifacts of continuous data.
\item \textbf{Empirical superiority.} On DNA-text benchmarks, our native integration consistently outperforms modular baselines, even those enhanced with complex CLIP-style contrastive alignment.
\end{itemize}

 \section{Related Work}
 \label{sec:related}

 \textbf{Modular Multimodal Large Language Models.} 
 The success of large language models (LLMs) has driven recent efforts to integrate multiple modalities. Originally developed for vision-language tasks \cite{radford2021learning, liu2023visual, wang2024qwen2, team2025kimi}, these modular architectures have naturally extended to the biological domain for processing genomic and molecular sequences. Most existing methods use a pre-trained modality-specific encoder connected to an LLM via projection modules \cite{dai2023instructblip, liu2024improved}. In DNA-text modeling, established baselines like BioReason \cite{fallahpour2025bioreason} and ChatNT \cite{de2025multimodal} align genomic features with text using explicit projection layers. However, as pointed out by \cite{diao2024unveiling}, such encoder-based architectures suffer from inherent practical and structural limitations, including strict reliance on pre-trained encoders, deployment inefficiencies, and early-stage geometric misalignments that force the LLM to expend substantial computational capacity on cross-modal alignment.

\textbf{Native Multimodal Large Language Models.} 
To address the bottlenecks of modular projections, recent research has shifted toward encoder-free, native MLLMs. This paradigm typically processes all modalities within a single autoregressive framework, representing multi-modal inputs as token sequences that are directly integrated into a unified vocabulary \cite{team2024chameleon, xie2024show}. By eliminating modality-specific encoders, native architectures enable cross-modal interaction from the earliest layers. This design principle has recently started to emerge in biological sequence modeling. For instance, Omni-DNA extends the native modeling paradigm to DNA-text settings by building a DNA-centric vocabulary augmented with a limited set of natural language tokens \cite{li2025omnidna}. However, while these native architectures demonstrate strong empirical performance, existing studies largely attribute their success to scaling laws, data composition, or engineering heuristics \cite{team2024chameleon, xie2024show, li2025omnidna}. The fundamental geometric mechanism underlying this advantage, specifically how native fusion closes the modality gap, remains largely unexplored.

\textbf{Modality Gap and Latent Geometry.} 
A fundamental challenge in multimodal learning is the modality gap: the geometric separation between representations of different modalities in a shared latent space. Prior work demonstrates that multimodal contrastive models embed different modalities into disjoint subspaces, exhibiting a cone-effect separation caused by heterogeneous data distributions \cite{liang2022mind}. Building on this geometric perspective, recent studies explore how strong cross-modal generalization occurs despite this gap, particularly when using frozen LLMs \cite{palit2023towards, schwettmann2023multimodal, shukor2024implicit, parekh2024concept, dang2024explainable}. For instance, the Implicit Multimodal Alignment (IMA) phenomenon suggests that while non-textual tokens remain separated from text in the embedding space, they can still trigger similar computational substructures inside the network \cite{shukor2024implicit}. 
Rather than investigating how models compensate for this persistent separation, our work introduces a latent geometry framework to rigorously quantify how the modality gap evolves layer-by-layer under controlled architectural choices. We theoretically and empirically demonstrate that native embedding via a unified vocabulary intrinsically closes this gap, offering a fundamental structural advantage over modular designs.

\section{Methodology}
\label{sec:proposed_method}
In this section, we formalize the theoretical foundations of our approach. We first establish a general formulation for MLLM and mathematically define the modality gap. Based on this framework, we demonstrate our core theoretical finding: while modular designs inherently suffer from a modality gap at the embedding layer, native unified-vocabulary MLLMs naturally eliminate this penalty and maintain an integrated representation space across all deep layers.

\textbf{General framework for analyzing MLLMs.}
Let $H^m = \{h^m_1, h_2^m, \ldots, h^m_{N_m}\}$ and $H^t = \{h^t_1, h^t_2, \ldots, h^t_{N_t}\}$ denote the continuous representation (embeddings) mapped into the LLM's shared latent space for modality $m$ and text $t$, respectively. Let $O = \{o_1, o_2, \ldots, o_{N_o}\}$ denote the corresponding output sequence. The standard next-token prediction (NTP) objective is formulated as: 
\begin{equation}
    \mathcal{L}_{\text{NTP}} = -\sum_{k=1}^{N_o} \log p(o_k|H^m, H^t, o_{<k}; \theta)
\label{eq:next_token_prediction_objective}
\end{equation}
where $\theta$ denotes the learnable parameters. Under this objective, we compare two dominant MLLM paradigms:
\begin{itemize}
    \item \textbf{Modular architecture:} combines an independently pretrained modality-specific encoder (denoted as $E_m$) with an LLM using a projection module $C$. Specifically, given the raw modality input $\mathcal{X}_m$ and tokenized text $\mathcal{X}_t$, the embeddings are obtained via $H^m = C(E_m(\mathcal{X}_m))$ and $H^t = E(\mathcal{X}_t)$, where $E$ is the LLM's text embedding layer. The output is expressed as:
    \begin{equation}
    O = LLM([C(E_m(\mathcal{X}_m)); E(\mathcal{X}_t)])=LLM([H^m; H^t])
    \end{equation}

    \item \textbf{Native unified-vocabulary architecture:} integrates multiple modalities within a shared token space and backbone. Given the tokenized modality input $\mathcal{X}_m$ and tokenized text $\mathcal{X}_t$, both representations are obtained directly from the same embedding layer $E$: $H^m = E(\mathcal{X}_m)$ and $H^t = E(\mathcal{X}_t)$. The output is simply:
    \begin{equation}
    O = LLM([E(\mathcal{X}_m); E(\mathcal{X}_t)])=LLM([H^m; H^t])
    \end{equation}
\end{itemize}

\subsection{Theoretical Analysis}
\label{sec:theory}
We now establish a comparative mathematical analysis of the representational differences between native MLLMs and their modular counterparts. While prior work \cite{liang2022mind} formalized the modality gap simply as the absolute Euclidean distance between empirical centroids (i.e., $\| \bar{\mathbf{x}} - \bar{\mathbf{y}} \|_2$), this absolute metric is highly scale-dependent. When comparing fundamentally different architectures, the absolute magnitude of embedding vectors can vary significantly, making direct distance comparisons skewed. 

To ensure a rigorous and scale-invariant comparison, we elevate this concept into a statistical framework. Inspired by Cohen's $d$ \cite{cohen2013statistical}, we redefine the modality gap by normalizing the centroid distance with the intra-modality variance. Throughout our analysis, we use the superscript $(l)$ to denote the representations at the $l$-th layer of the LLM, where $l=0$ specifically refers to the initial continuous representations mapped by the embedding layer.

\textbf{Definition 1} (Modality Gap). 
\label{def:main_def} Let $\mathcal{H}_A$ and $\mathcal{H}_B$ be the latent sub-manifolds for any two distinct modalities $A$ and $B$, respectively. At any given layer $l$ of the LLM, let $c_A^{(l)}$ and $c_B^{(l)}$ denote the expected representations (centroids) of $\mathcal{H}_A$ and $\mathcal{H}_B$, and let $(\sigma_A^{(l)})^2$ and $(\sigma_B^{(l)})^2$ denote their respective average intra-modality variances. The modality gap $\Delta^{(l)}(\mathcal{H}_A, \mathcal{H}_B)$ is defined as the normalized Euclidean distance between the expected representations of the two modalities:
\begin{equation}
\Delta^{(l)}(\mathcal{H}_A, \mathcal{H}_B) = \frac{\|c_A^{(l)} - c_B^{(l)}\|_2}{\sqrt{\frac{(\sigma_A^{(l)})^2 + (\sigma_B^{(l)})^2}{2}}}
\end{equation}
This normalized formulation offers a crucial theoretical advantage: it isolates the structural separation of modalities from trivial scaling artifacts. Furthermore, because modern LLMs inherently constrain variance parameters ($\sigma^2$) to be strictly positive and bounded (e.g., via initialization and normalization layers), our normalized metric $\Delta$ shares a strict monotonic relationship with the absolute gap defined by \cite{liang2022mind}. Consequently, theoretically minimizing our normalized gap rigorously guarantees the closure of the modality gap in its original sense. Therefore, in the remainder of this paper, we use the term \textit{modality gap} to refer to this normalized metric $\Delta$, unless specified otherwise.

Based on this robust definition, we present the following lemma showing that at the input layer, the modular architecture inherently suffers from a larger modality gap than its native counterpart.

\begin{lemma}[Embedding-level gap]
\label{lemma:main_init_gap}
\textit{Let $V_A$ and $V_B$ denote the modality-specific vocabularies for modalities $A$ and $B$, respectively. The expected modality gaps at the initial embedding layer ($l = 0$) of the modular architecture $\Delta_{\text{Mod}}^{(0)}(\mathcal{H}_A, \mathcal{H}_B)$ and the native architecture $\Delta_{\text{Nat}}^{(0)}(\mathcal{H}_A, \mathcal{H}_B)$ satisfy:}
$$\mathbb{E}[\Delta_{\text{Mod}}^{(0)}(\mathcal{H}_A, \mathcal{H}_B)] > \lim_{|V_A|, |V_B| \to \infty} \mathbb{E}[\Delta_{\text{Nat}}^{(0)}(\mathcal{H}_A, \mathcal{H}_B)] = 0 $$
\end{lemma}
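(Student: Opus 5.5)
We model the embedding layer at $l=0$ as a random initialization and compute the expected gap of Definition~1 separately for each architecture. In the \emph{native} case, every token of $V_A\cup V_B$ receives its row of the single embedding matrix $E$, drawn i.i.d.\ from one distribution $\mathcal{D}$ with mean $\mu$ and covariance $\Sigma$ ($\operatorname{tr}\Sigma = d s^2>0$); the standard choice is $\mathcal{N}(0, s^2 I_d)$. In the \emph{modular} case, $\mathcal{H}_A$ is produced by $C\circ E_m$ and $\mathcal{H}_B$ by $E$. These are independently trained or initialized maps, so we model their outputs as samples from two distributions with centroids $\mu_A,\mu_B$ and positive variances. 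The centroids are either deterministically distinct or independently random, so that $\mu_A-\mu_B$ is a nondegenerate random vector in $\mathbb{R}^d$.

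\textbf{Native case.} Let the empirical centroids be $\hat c_A = |V_A|^{-1}\sum_{v\in V_A} E_v$ and $\hat c_B = |V_B|^{-1}\sum_{v\in V_B} E_v$. Since $V_A\cap V_B=\emptyset$, these are independent, so $\mathbb{E}\|\hat c_A-\hat c_B\|_2^2 = \operatorname{tr}\Sigma\,(1/|V_A|+1/|V_B|)\to 0$. The intra-modality variances $\hat\sigma_A^2,\hat\sigma_B^2$ are sample variances and converge almost surely, and in $L^1$, to $s^2$ by the law of large numbers. We then control $\mathbb{E}[\Delta_{\text{Nat}}^{(0)}]$ by splitting on the event $G=\{\hat\sigma_A^2+\hat\sigma_B^2\ge s^2\}$. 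On $G$, Cauchy--Schwarz gives $\mathbb{E}[\Delta\mathbf 1_G]\le \sqrt{2}\,s^{-1}\big(\mathbb{E}\|\hat c_A-\hat c_B\|^2\big)^{1/2}\to0$. On $G^c$, we use a concentration bound for sample variances (Gaussian/chi-square tails, giving $\Pr(G^c)\le e^{-c d|V|}$). We combine this with a crude bound on $\Delta$: for Gaussian samples the ratio $\|\hat c_A-\hat c_B\|/\hat\sigma$ has finite moments once $d|V|$ is large, for example via the independence of the sample mean and sample variance. Hence this term also vanishes, which gives the limit $0$.

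\textbf{Modular case.} By the same law-of-large-numbers argument, $\Delta_{\text{Mod}}^{(0)} \to \|\mu_A-\mu_B\|_2/\sqrt{(\sigma_A^2+\sigma_B^2)/2}$. Even before any limit, $\mathbb{E}\|\hat c_A-\hat c_B\|^2 = \mathbb{E}\|\mu_A-\mu_B\|^2 + \text{(positive sampling terms)}$. The high-dimensional-geometry point is that two independently drawn centroid offsets are almost surely distinct: if $\mu_A,\mu_B$ are independent with a density, then $\Pr(\mu_A=\mu_B)=0$. Moreover, $\|\mu_A-\mu_B\|$ concentrates around $\sqrt{2d}\,\tau$, where $\tau>0$ is the centroid spread. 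Because $\Delta\ge 0$ and $\Delta>0$ almost surely, we get $\mathbb{E}[\Delta_{\text{Mod}}^{(0)}]>0$, which is the strict inequality. Finite vocabularies only add the positive sampling term, so the inequality holds both at finite size and in the limit.

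\textbf{Main obstacle.} The hard step is justifying the expectation of a \emph{ratio} in the native limit: convergence of the numerator in $L^2$ does not by itself give convergence of $\mathbb{E}[\Delta]$, because the denominator can be small. My first attempt is the Gaussian-initialization route above, where the numerator and denominator are independent and negative moments of the chi-square are explicit. Failing that, I would assume variances bounded away from zero, which the paper's remark about normalization layers supports, and apply dominated convergence. A secondary modelling issue is that the modular strict positivity rests on the assumption that $E_m$ and $E$ do not share a centroid. This needs to be stated explicitly, since it is the formal content of ``independently optimized encoders.''
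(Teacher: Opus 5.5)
Your proposal is correct and follows essentially the same route as the paper. Both model the $l=0$ embeddings as i.i.d.\ (Gaussian) draws and write $\mathbb{E}\|c_A^{(0)}-c_B^{(0)}\|_2^2$ as a sampling term of order $1/|V_A|+1/|V_B|$ plus $\|\mu_A-\mu_B\|_2^2$; the offset cancels for a shared embedding matrix and survives in the modular case under the explicitly assumed condition $\mu_A\neq\mu_B$. Your Jensen/Cauchy--Schwarz bound, your treatment of the variance denominator, and the almost-sure positivity argument (with Fatou's lemma covering the modular limit) make rigorous what the paper obtains only through the approximation $\mathbb{E}\|\cdot\|_2\approx\sqrt{\mathbb{E}\|\cdot\|_2^2}$, while the paper effectively drops the normalization.
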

\textit{Proof.} Refer to the Appendix \ref{sec:proof}. 

While Lemma \ref{lemma:main_init_gap} establishes a structural advantage at the initial embedding layer, modern LLMs are characterized by massive, deep non-linear transformations. A critical question arises: does this geometric advantage survive throughout the deep forward pass, or does the modality gap eventually reopen? Theorem \ref{thm:main_deep_gap} demonstrates that, under standard continuity assumptions, the native architecture inherently preserves this minimal gap across all hidden layers.
\begin{theorem}[Layer-wise gap preservation]
\label{thm:main_deep_gap}
Given the initial condition established in Lemma \ref{lemma:main_init_gap}, and assuming the layer-wise transformation functions are Lipschitz continuous, the expected modality gaps $\Delta_{\text{Mod}}^{(l)}$ and $\Delta_{\text{Nat}}^{(l)}$ at every subsequent layer $l$ ($l > 0$) throughout the forward pass satisfy:
\[
   \mathbb{E}[\Delta_{\text{Mod}}^{(l)}(\mathcal{H}_A, \mathcal{H}_B)] > \lim_{|V_A|, |V_B| \to \infty} \mathbb{E}[\Delta_{\text{Nat}}^{(l)}(\mathcal{H}_A, \mathcal{H}_B)] = 0
\]
\end{theorem}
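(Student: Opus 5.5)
The plan is to prove the two halves of the inequality separately. The native limit will follow from pushing the vanishing initial gap of Lemma~\ref{lemma:main_init_gap} through the Lipschitz layers. The strict positivity on the modular side will come from an inductive argument showing that a positive gap cannot be annihilated in finitely many layers.

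Write $f^{(l)}$ for the $l$-th layer map and $L_l$ for its Lipschitz constant. Set $F^{(l)} = f^{(l)}\circ\cdots\circ f^{(1)}$, so $F^{(l)}$ is Lipschitz with constant $\Lambda_l=\prod_{j\le l}L_j<\infty$.

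\emph{Native side.} In the native architecture both modalities are embedded by the same layer $E$. The token embeddings are drawn from a single distribution over a shared vocabulary, so the token-level distributions of $\mathcal{H}_A$ and $\mathcal{H}_B$ coincide in the limit $|V_A|,|V_B|\to\infty$. This is exactly what drives $\mathbb{E}[\Delta^{(0)}_{\text{Nat}}]\to 0$ in Lemma~\ref{lemma:main_init_gap}.
\begin{itemize}
\item For the numerator, I bound $\|c_A^{(l)}-c_B^{(l)}\|_2$ through a coupling. Take any coupling $(x_A,x_B)$ of the two layer-$0$ distributions. Jensen's inequality and the Lipschitz property give $\|c_A^{(l)}-c_B^{(l)}\|_2 \le \Lambda_l\,\mathbb{E}\|x_A-x_B\|_2$. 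Minimising over couplings yields $\Lambda_l W_1$, and $W_1\to0$ because the two distributions converge to the same limit.
\item For the denominator, I need $\sigma_A^{(l)},\sigma_B^{(l)}$ bounded away from zero. I will use the paper's standing assumption that normalization layers keep variances strictly positive and bounded. This makes the normalized gap a continuous function of the centroid distance, so $\mathbb{E}[\Delta^{(l)}_{\text{Nat}}]\to 0$.
\end{itemize}
Bounded variances also give uniform integrability, which justifies exchanging the limit with the expectation.

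\emph{Modular side.} Lipschitz continuity only gives upper bounds, so it cannot by itself show that a gap persists. I will argue by induction on $l$.
\begin{itemize}
\item The base case is Lemma~\ref{lemma:main_init_gap}, which gives $\mathbb{E}[\Delta^{(0)}_{\text{Mod}}]>0$.
\item For the inductive step, assume the layer-$l$ distributions of the two modalities differ. The modular inputs come from an independently trained encoder and projector, so they generically lie on a sub-manifold disjoint from the text embeddings. I then need $f^{(l+1)}$ not to map two centroid-separated distributions onto equal centroids.
\end{itemize}
The most natural route is a co-Lipschitz (bi-Lipschitz) property of each block. The residual structure $x\mapsto x+g(x)$ with $\mathrm{Lip}(g)<1$ gives $\|f(x)-f(y)\|\ge(1-\mathrm{Lip}(g))\|x-y\|$. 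Given that lower bound, the map is injective, so distinct layer-$l$ distributions remain distinct after $f^{(l+1)}$. Combined with the positive, bounded variances in the denominator, this keeps $\mathbb{E}[\Delta^{(l+1)}_{\text{Mod}}]>0$.

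\emph{Main obstacle.} This inductive step is where the argument is hardest. A bi-Lipschitz map preserves distances between points, but it does not in general preserve distances between centroids. A nonlinear injective map can shift two distinct distributions so that their means coincide. My first attempt is to apply the co-Lipschitz bound to the distributions themselves and then pass to centroids, accepting a generic-position assumption: equality of centroids is a measure-zero event over the random initialization of the encoder and projector. If that is too weak, the fallback is to adopt the layer-wise lower bound $\Delta^{(l)}_{\text{Mod}}\ge \kappa_l\,\Delta^{(0)}_{\text{Mod}}$ with $\kappa_l>0$ as an explicit assumption, justified by the residual stream. The strict inequality then follows immediately, since the right-hand side is $0$ by the native part.
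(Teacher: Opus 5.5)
Your decomposition is the paper's. You push the vanishing native gap forward with a Lipschitz bound, and you invoke the residual (co-Lipschitz) structure of the blocks so the modular gap cannot be annihilated. Within that skeleton you are more careful than the paper in two places.

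First, the paper writes $\|c_A^{(l)}-c_B^{(l)}\|_2=\|f_\theta^{(l)}(c_A^{(l-1)})-f_\theta^{(l)}(c_B^{(l-1)})\|_2$. This treats the centroid of the propagated representations as the image of the previous centroid, which fails for nonlinear blocks. Under the $1$-Lipschitz map $x\mapsto|x|$, the laws $\delta_0$ and $\tfrac12(\delta_{-1}+\delta_1)$ share the mean $0$, but their images have means $0$ and $1$. Your Jensen/coupling bound $\|c_A^{(l)}-c_B^{(l)}\|_2\le\Lambda_l W_1$ is the correct replacement. It also shows that what must vanish at layer $0$ is the distance between the whole distributions, not just between their centroids, and this is exactly what a shared embedding law provides.

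Second, you control the variance normalization in $\Delta^{(l)}$. The paper silently identifies it with the layer-$0$ normalization when it writes $\mathbb{E}[\Delta^{(l)}]\le B^l\,\mathbb{E}[\Delta^{(0)}]$.

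On the modular side, the obstacle you identify is genuine, and the paper's proof does not overcome it. The paper asserts that bi-Lipschitz residual blocks keep the manifolds from collapsing to a point, which, as you note, implies nothing about centroids. Do not take your fallback $\Delta^{(l)}_{\text{Mod}}\ge\kappa_l\Delta^{(0)}_{\text{Mod}}$, since that is essentially the conclusion. Your generic-position claim is provable from hypotheses you already have:
\begin{itemize}
\item The co-Lipschitz bound makes $F^{(l)}$ injective.
\item Condition on every embedding except one modality-$A$ token embedding $e_{A,1}$. Then $c_A^{(l)}=c_B^{(l)}$ holds iff $F^{(l)}(e_{A,1})$ equals one fixed vector.
\item That vector has at most one preimage, which is a null set for the nondegenerate Gaussian law of $e_{A,1}$.
\item By independence and Fubini, $\Pr(c_A^{(l)}=c_B^{(l)})=0$. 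Since the normalizing variances are finite and positive, $\Delta^{(l)}_{\text{Mod}}>0$ almost surely.
\item A nonnegative, almost surely positive random variable has strictly positive expectation.
\end{itemize}
No induction over layers and no uniform lower bound are needed.

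This argument never uses $\mu_A\neq\mu_B$ or any disjoint sub-manifold structure. The strictness comes purely from finite-vocabulary sampling noise, and the same argument gives $\mathbb{E}[\Delta^{(l)}_{\text{Nat}}]>0$ at any finite vocabulary size. So the statement is really comparing a finite-vocabulary modular gap against the infinite-vocabulary native limit, and your ``independently trained encoder'' intuition is not what its strict inequality actually rests on.
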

\textit{Proof.} Refer to the Appendix \ref{sec:proof}. 

Theorem 1 bridges our theoretical analysis with practical architecture design. It guarantees that the native paradigm maintains a minimal modality gap $\Delta^{(l)}_{\text{Nat}}$ across all Transformer layers. This mathematical property reveals a fundamental architectural divergence: because modular baselines preserve their initial geometric separation throughout the forward pass, they must rely on explicit projection modules to forcefully bridge the modalities. Consequently, this persistent separation forces the LLM to expend its computational capacity on geometric reconciliation rather than deep semantic reasoning. In contrast, by preserving an inherently unified representation space, the native architecture achieves seamless cross-modal integration without requiring explicit alignment interventions.

\begin{figure}[!htp]
    \centering
    \includegraphics[width=0.9\linewidth]{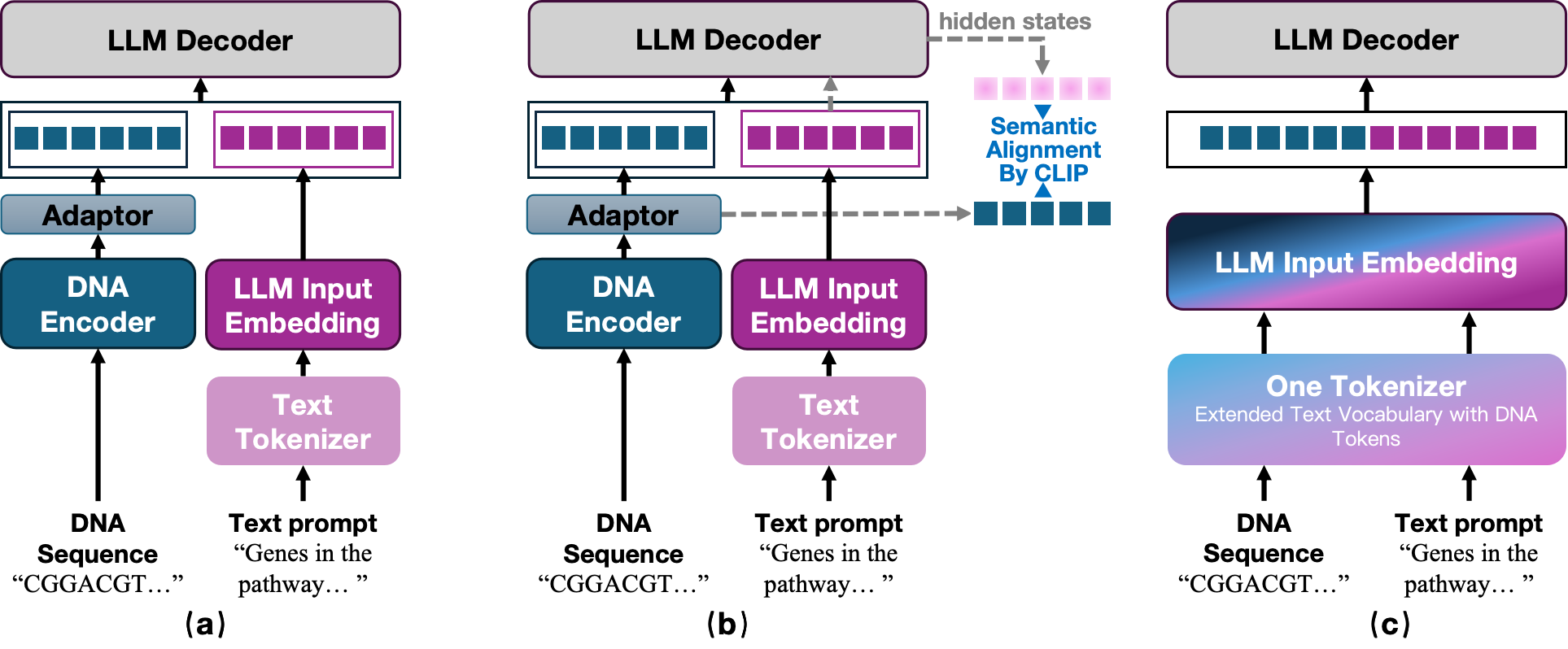}
    \caption{We systematically investigate three DNA-text fusion strategies. (a) The standard modular architecture, adopted by current DNA-text models. (b) SeqCLIP: explict semantic alignment on the gene encoder by contrastively learning on massive DNA-text pairs. (c) Our One Tokenizer: extend the pre-trained LLM's vocabulary with DNA-specific tokens, allowing LLM to process them natively. }
    \label{fig:dna-llm-frameworks}
\end{figure}

\subsection{Empirical Textbed: DNA-Text Modeling}
As motivated in Section \ref{sec:intro}, we utilize DNA-text modeling as a controlled, discrete multimodal environment. This allows us to empirically validate our theoretical findings without the confounding artifacts introduced by continuous data tokenization such as vector quantization (e.g, MoVQGAN \cite{zheng2022movq}).

\textbf{Task formulation.} We formulate DNA-text modeling as a standard cross-modal conditional generation task. Formally, let $S$ denote a DNA sequence, and $Q$ denote a tokenized English query (e.g., obtained via Byte-Pair Encoding \cite{sennrich2016neural}). Let $A = (a_1, a_2, \dots, a_{N_a})$ denote the corresponding English answer sequence of length $N_a$. Our goal is to learn a generic DNA-text model $f_\theta$, parameterized by $\theta$, which outputs the token-level conditional probability distribution $p$ to autoregressively generate the answer:
$
p(A|S, Q; \theta) = \prod_{k=1}^{N_a} p(a_k|S, Q, a_{<k}; \theta).
$

\textbf{Modular baselines and SeqCLIP.} Current DNA-text models (Fig. \ref{fig:dna-llm-frameworks}(a)) typically adopt a modular design, relying on different cross-modal adapters. For instance, BioReason \cite{fallahpour2025bioreason} projects per-token embeddings via a learnable linear layer, whereas ChatNT \cite{de2025multimodal} aggressively compresses genomic tokens into 64 query embeddings using a nonlinear Q-former. 
To facilitate a fair and representative comparison with state-of-the-art multimodal alignment approaches, we introduce an enhanced baseline: \textbf{SeqCLIP} (Fig. \ref{fig:dna-llm-frameworks}(b)). 
Motivated by the widespread success of contrastive pre-training in vision-language foundation models, we apply an InfoNCE contrastive loss between the projected genomic representations and the corresponding text representations using paired gene-text data \cite{dhanasekar2025genechat} during pre-training. Rather than relying solely on the generative loss, this contrastive objective explicitly forces the adapter to extract semantically meaningful and aligned representations before the autoregressive generation phase.

\textbf{Native architecture: One Tokenizer.} To bypass the inherent modality gap of modular designs, we adopt a native integration.  Although Omni-DNA \cite{li2025omnidna} also attempts vocabulary mixing, its proprietary design remains fundamentally \textit{DNA-centric}. Instead of a balanced integration, it merely appends a strictly limited set of text tokens (around $4k$) to a genomic tokenizer. This approach severely restricts the broad linguistic competence and complex reasoning capabilities available in modern LLMs. Furthermore, its closed-source nature precludes direct empirical comparison.

Instead, we propose an open, \textit{LLM-centric} native integration: \textbf{One Tokenizer} (Fig. \ref{fig:dna-llm-frameworks}(c)). Let $V_{\text{text}}$ denote the LLM's rich textual vocabulary and $V_{\text{DNA}}$ denote the discrete DNA $k$-mer vocabulary. We construct a unified representation space $V = V_{\text{text}} \cup V_{\text{DNA}}$. Consequently, the original LLM embedding layer is expanded to utilize a shared parameter matrix $\mathbf{E} \in \mathbb{R}^{|V| \times d}$. Let $\text{Emb}(\cdot)$ denote the embedding lookup function parameterized by $\mathbf{E}$. Both DNA and text tokens are mapped directly into the shared latent space through this single layer. The concatenated input sequence $[\text{Emb}(S); \text{Emb}(Q)]$ is then processed uniformly by the LLM backbone $f_\theta$.This homogeneous architecture requires zero modality-specific encoders or projection parameters, naturally executing the zero-gap initialization and preserving deep semantic integration. 

Furthermore, this vocabulary expansion poses minimal risk of representation drift caused by vocabulary distribution shifts \cite{li2025omnidna}. Given that the native text vocabulary $V_{\text{text}}$ in modern LLMs (e.g., Qwen) encompasses over 150,000 tokens, the addition of a DNA vocabulary (e.g., $|V_{\text{DNA}}| = 15,625$ for $k=6$) constitutes only a marginal expansion of roughly 10\%. This structural asymmetry ensures that the LLM's core language representations remain highly stable during fine-tuning, intrinsically preserving its reasoning capabilities. For comparison, the DNA-centric model Omni-DNA employs a roughly 1:1 token ratio (4,096 DNA tokens versus 4,077 text wordpieces). Because their text and genomic token spaces are comparable in scale, they are highly susceptible to this distribution shift and subsequent representation drift, forcing them to introduce explicit methodological workarounds to address the instability.

\section{Experiments}
\label{sec:experiments}
To validate our theoretical claims and evaluate the proposed architecture, we first probe the latent space to verify whether the observed modality gap aligns with the theoretical bounds established in Lemma \ref{lemma:main_init_gap} and Theorem \ref{lemma:main_init_gap} (Sec.\ref{subsec:theoretical_validation}). We then investigate how the reduction of this gap translates to superior downstream performance against modular baselines (Sec.\ref{subsec:downstream_performance}). Finally, we explore the optimal configuration choices, specifically regarding embedding tuning strategies and $k$-mer granularity, to maximize the efficacy of the One Tokenizer (Sec.\ref{subsec:ablation}).

\subsection{Experimental Settings}
\textbf{Datasets.} We evaluate our method on two widely adopted benchmarks targeting different downstream capabilities: the Nucleotide Transformer Benchmark (NT) \cite{de2025multimodal} for classification, and the KEGG-Derived Biological Reasoning Dataset (KEGG) \cite{fallahpour2025bioreason} for complex reasoning. The NT benchmark consists of relatively simple question-answer pairs, where each instance pairs a natural language question with one or two DNA sequences. In contrast, the KEGG benchmark emphasizes multi-step biological deduction over simple multi-choice tasks, requiring models to synthesize biological knowledge before generating the final response. To ensure fair comparison, we adopt the standard training, validation, and test splits established in prior works \cite{fallahpour2025bioreason, de2025multimodal}. Additional dataset details are deferred to the Appendix \ref{appendix:datasets} due to space constraints.

\textbf{Evaluation metrics.}
We employ task-specific evaluation metrics tailored to the requirements of classification and biological reasoning:

\textit{Classification.} For the NT benchmark, we report the Matthews Correlation Coefficient (MCC) across all 18 tasks, maintaining consistency with established evaluation protocols  \cite{schiff2024caduceus, de2025multimodal}. For the KEGG benchmark, we follow the assessment framework introduced in \cite{fallahpour2025bioreason}, reporting Accuracy, macro F1-score, precision, and recall. These metrics collectively ensure a robust evaluation of the models' ability to categorize biological sequences and predict their functional annotations across diverse tasks.

\textit{Reasoning.} We evaluate the reasoning performance using \textit{reasoning score}, calculated through a rank-based LLM-as-a-judge protocol \cite{chen2024mllm}. A panel of 3 LLMs (GPT-5.1, Gemini-3.1-Pro-Preview, and o3) jointly ranks the outputs of all four evaluated models alongside the ground-truth reference. Rankings are converted into ordinal scores (from $5$ for the best to $1$ for the worst). For each sample, the scores from the 3 LLM judges are summed, and the final performance is computed by averaging these accumulated scores across all test samples. Importantly, by including the ground truth as a candidate in the ranking pool, it receives an aggregated score that serves as the empirical upper bound. This design allows us to not only rank the models but also directly quantify the performance gap between each model's reasoning and the ideal ground truth. Further details are deferred to the Appendix \ref{appendix:llm}.

\textbf{Implementation details.}
We employ Qwen3-1.7B \cite{yang2025qwen3} as the language model backbone and a $6$-mer DNA tokenizer for sequence encoding. All experiments are conducted on $8$ NVIDIA A100 GPUs using the Adam optimizer ($\beta_1 = 0.9$, $\beta_2 = 0.999$, weight decay 0.05, and learning rate $10^{-6}$). Regarding the baselines, we evaluate the modular architectures of BioReason and ChatNT. To ensure a rigorously controlled comparison, we align all underlying LLM backbones to a standardized Qwen3-1.7B, guaranteeing that any observed performance dynamics are directly attributable to the structural differences between modular and native architectures, rather than differences in raw parameter counts. While BioReason natively operates at this capacity, we specifically adapt ChatNT from its default 8B configuration to match this scale. Additionally, we implement SeqCLIP as a comparative baseline, which is fine-tuned using LoRA with a rank of 8. In contrast, we perform full-parameter optimization for our One Tokenizer. All models are trained with an effective batch size of 8 (achieved via a per-device batch size of 1 with 8 gradient accumulation steps).

\begin{figure}[!htp]
    \centering
    \includegraphics[width=\linewidth]{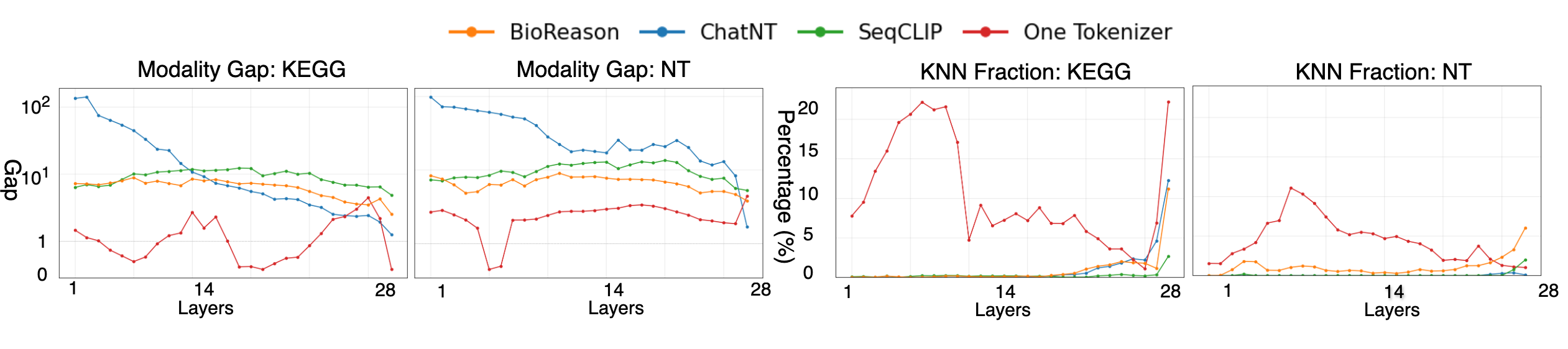}
    \caption{Empirical analysis of the modality gap and KNN fraction on KEGG and NT.  We extract the latent embeddings at each layer and compute both metrics.}
    \label{fig:modality_gap_analysis}
\end{figure}

\begin{figure}[!htp]
    \centering
    \includegraphics[width=0.9\linewidth]{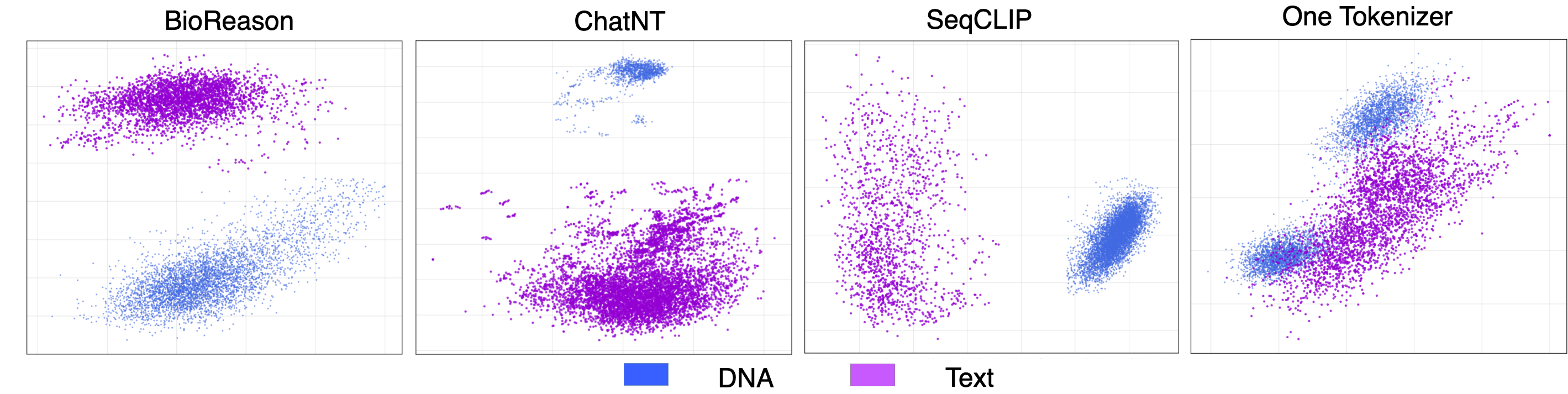}
    \caption{PCA visualization of the 14th-layer embeddings on the KEGG benchmark. One Tokenizer exhibits nearly complete overlap between modalities, whereas others show clear modality gaps. Similar results on the NT benchmark are provided in Appendix \ref{appendix:vis_nt}.}
    \label{fig:visualization_latent_embeddings_kegg}
    \vspace{-5pt}
\end{figure}

\subsection{Theoretical Validation}
\label{subsec:theoretical_validation}
To empirically validate the theoretical bounds established in Section \ref{sec:theory}, we analyze the modality gap across hidden layers using two key metrics: (1) The normalized Euclidean distance formulated in Definition 1, and (2) The cross-modal $k$-nearest neighbor (KNN) fraction ($k=100$) to quantify local representation mixing. A higher fraction indicates that DNA and text tokens are more uniformly interwined in the latent space, reflecting deeper cross-modal integration (details of these two metrics are deferred to the Appendix \ref{appendix:metrics}).

As illustrated in Fig \ref{fig:modality_gap_analysis} and \ref{fig:visualization_latent_embeddings_kegg}, our results across both KEGG and NT benchmarks yield several key insights:
First, on both KEGG and NT, modular architectures exhibit a substantial modality gap at the initial layer ($l = 0$), reflecting the inherent misalignment of independently learned embedding spaces. As evidenced by the near-zero KNN fraction across the first 20 layers, this initial separation forces the model to expend significant computational depth purely on geometric reconciliation, i.e. aligning disparate subspaces and delaying actual cross-modal reasoning to later stages. In contrast, One Tokenizer starts with a significantly reduced initial gap, empirically validating the theoretical advantages established in Lemma \ref{lemma:main_init_gap}. This native alignment eliminates the need for early geometric reconciliation, yielding an initial KNN fraction orders of magnitude larger than its modular competitors.

Second, as representations propagate through deeper layers ($l>0$), modular methods attempt to bridge the modality gap through cross-modal attention, yet a noticeable separation persists until the final layer. On the contrary, One Tokenizer consistently maintains a seamless integration throughout the Transformer layers. This results aligns with Theorem \ref{thm:main_deep_gap}, confirming that a single, unified embedding matrix allows DNA and text to naturally align on a continuous manifold. Notably, Fig. \ref{fig:visualization_latent_embeddings_kegg} visualizes this effect at the 14th layer -- the exact midpoint of the network ($L=28$) where deep semantic integration is expected to stabilize before task-specific routing in the final layers \cite{tenney2019bert}. At this critical computational juncture, One Tokenizer already achieves a nearly seamless overlap of DNA and text embeddings, whereas baseline models maintain distinct, isolated clusters (consistent geometric behaviors are also observed on the NT benchmark, see Appendix \ref{appendix:vis_nt}).

Finally, this native geometric unity fundamentally alters how the LLM approaches multimodal tasks. Because One Tokenizer successfully maps DNA and text onto a truly unified continuous manifold, the LLM no longer processes them as disparate “foreign” entities requiring constant translation. Instead, cross-modal reasoning is simplified into standard autoregressive token prediction over a shared semantic space. This geometric unity intrinsically extends the pre-trained LLM's powerful zero-shot and logical reasoning capabilities to biological sequences, serving as the core architectural driver for the substantial performance gains observed in downstream tasks, as shown in the next section.

\vspace{-10pt}
\begin{table}[!htp]
    \centering
    \caption{Classification performance on KEGG and NT. One Tokenizer achieves the best performance across all metrics.}
    \begin{tabular}{|c |c c c c |c|}
        \hline
        \multirow{2}{*}{\textbf{Method}} & \multicolumn{4}{c|}{\textbf{KEGG}} & \textbf{NT} \\
        \cline{2-6}
        & Accuracy & F1-score & Precision & Recall & MCC \\
        \hline
        BioReason & 88.42 & 72.13 & 75.42 & 71.91 & 0.54\\
        ChatNT & 46.94& 17.58 &21.42 & 15.85 & 0.55\\
         SeqCLIP &  87.50 & 90.31 & 94.44 & 87.50 & 0.56\\
        \textbf{One Tokenizer} &\textbf{99.31}& \textbf{95.24}& \textbf{96.43} &\textbf{94.64} & \textbf{0.62}\\
        \hline
    \end{tabular}
    \vspace{-5mm}
    \label{tab:sota_vep_kegg}
\end{table}

\begin{wrapfigure}{l}{0.5\textwidth}
    \centering
    \includegraphics[width=0.5\textwidth]{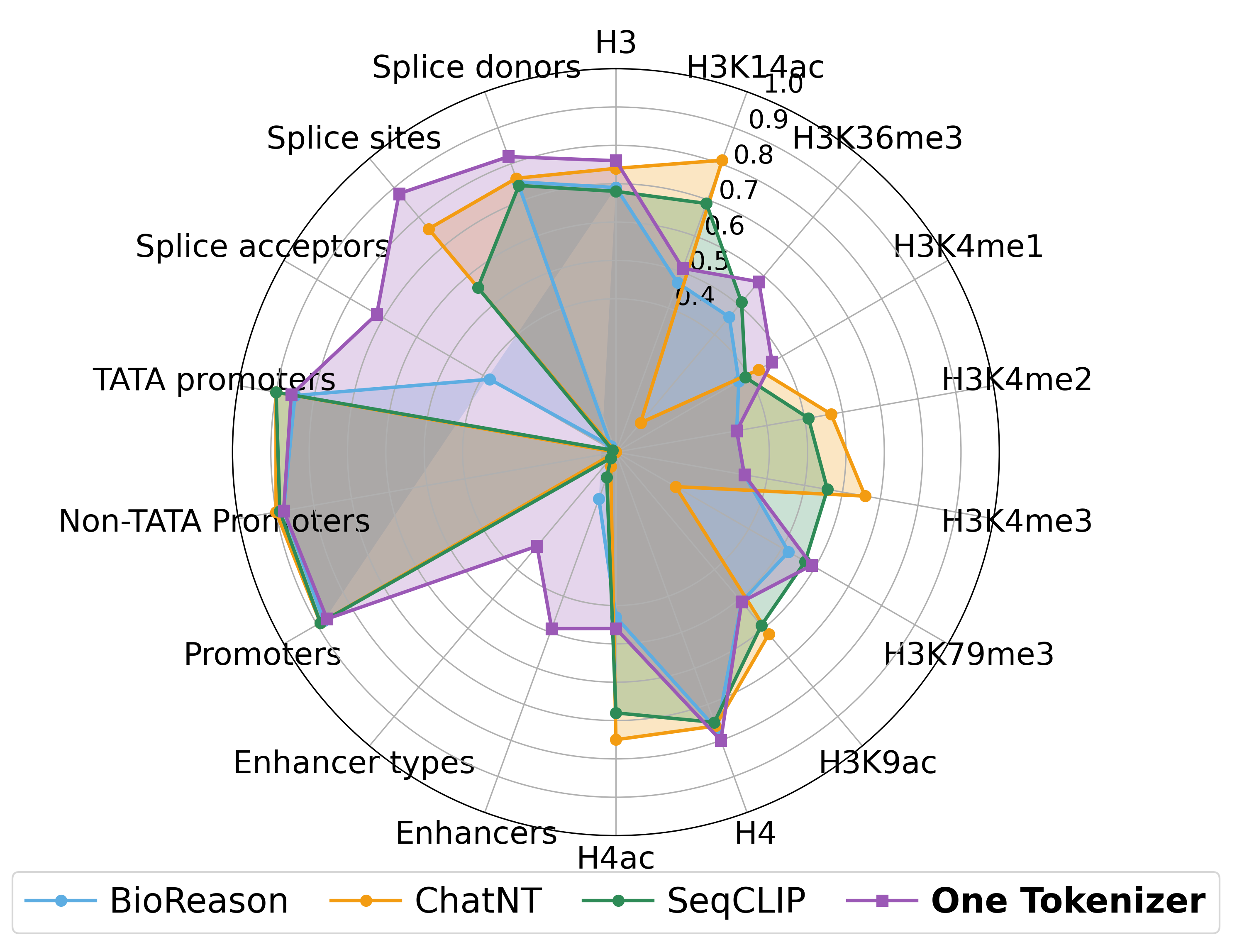}
    \caption{MCC on the 18 tasks in NT.}
    \label{fig:sota_nt}
    \vspace{-3mm}
\end{wrapfigure}

\subsection{Downstream Performance}
\label{subsec:downstream_performance}
\textbf{Classification task. } Performance on the 18 NT tasks is summarized in Fig.~\ref{fig:sota_nt}, while the KEGG classification results are presented in  Tab.~\ref{tab:sota_vep_kegg} (see Appendix \ref{appendix:sota_nt} for details). 
Overall, we observe: (1) SeqCLIP consistently outperforms BioReason, demonstrating that explicit CLIP-style semantic alignment improves adapter-based models. 

(2) One Tokenizer achieves the highest average MCC score ($0.62$) across all 18 NT tasks, with \textit{significant improvements} on both splice- and enhancer- related predictions. These tasks require the model to capture both fine-grained local features and broad global context. Modular architectures typically struggle here because their modality-specific projection layers act as an information bottleneck, inevitably distorting or subsampling the original sequence features. In contrast, One Tokenizer avoids information loss by mapping sequences directly into the LLM’s native space, ensuring the model retains full sequence fidelity for pattern recognition.

(3) On the more complex KEGG benchmark with 37 disease classes, One Tokenizer achieves a SOTA accuracy of $99.31\%$. While modular baselines (BioReason and SeqCLIP) exhibit performance inconsistencies between overall accuracy and macro-F1, One Tokenizer excels in both metrics simultaneously.
By natively embedding genomic tokens within the LLM's semantic space, our architecture establishes more precise decision boundaries for multi-class classification. This native integration effectively resolves the alignment-induced instability inherent in dual-encoder designs, ensuring robust performance even across diverse and imbalanced task distributions.

\begin{wrapfigure}{r}{0.43\textwidth}
    \centering
    \includegraphics[width=0.43\textwidth]{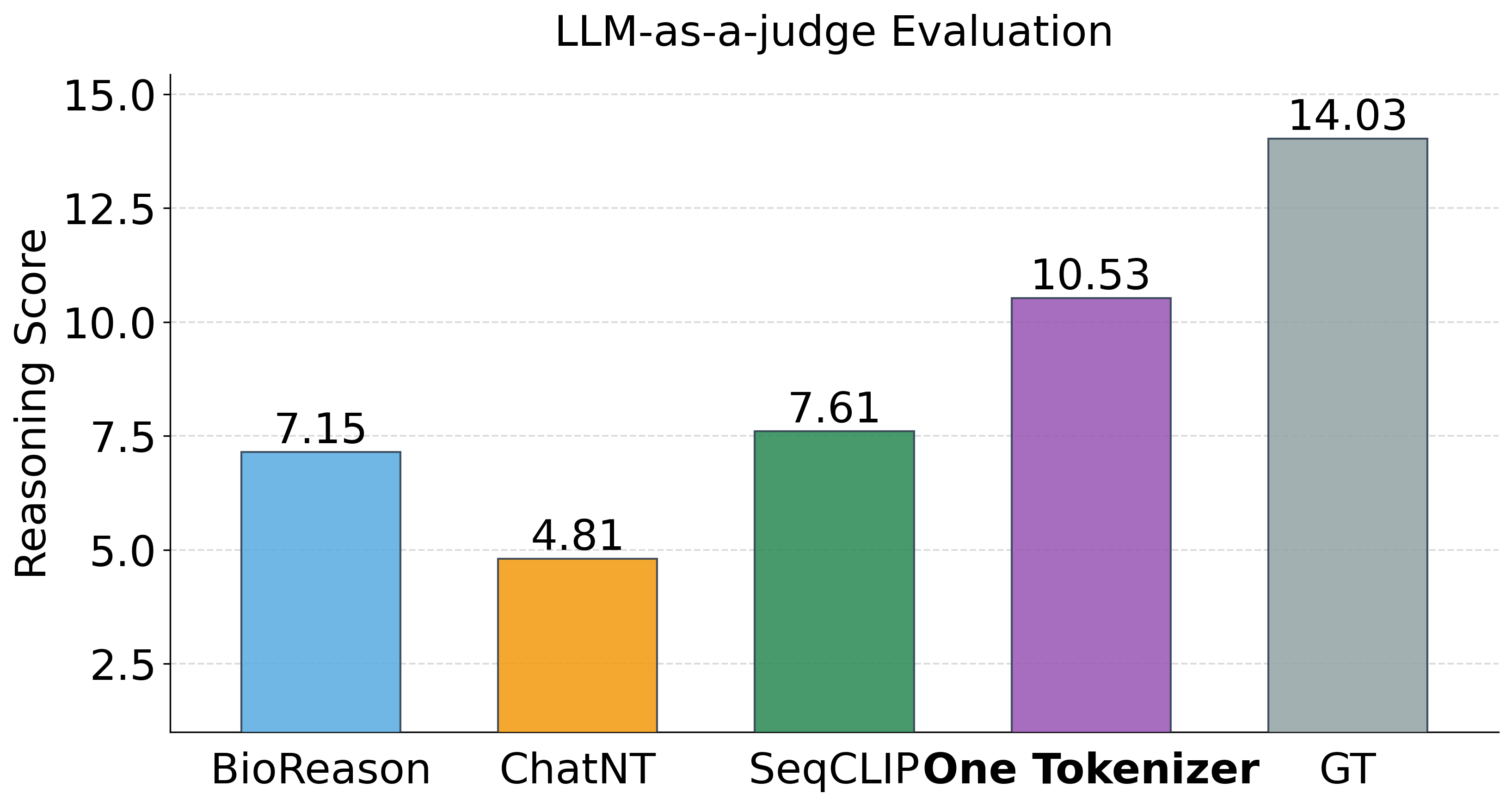}
    \caption{Reasoning score on KEGG, assessed using our LLM-as-a-judge protocol. }
    \label{fig:sota_reasoning}
    \vspace{-5mm}
\end{wrapfigure}

\textbf{Reasoning task.} We evaluate deep cross-modal reasoning using our proposed LLM-as-a-judge protocol (Fig.\ref{fig:sota_reasoning}). 
One Tokenizer achieves a superior reasoning score of 10.53, significantly narrowing the gap to the ground-truth (GT) upper bound of 14.03. While modular baselines (ChatNT, BioReason, and SeqCLIP) lag considerably behind, our native integration successfully unlocks the LLM's inherent reasoning engine for biological sequences. This high-fidelity alignment  facilitates complex logical deductions that are otherwise bottlenecked by traditional projection based designs. Additional case studies are provided in Appendix \ref{appendix:reasoning_results}.

\subsection{Ablation Studies: Constructing and Optimizing One Tokenizer}
\label{subsec:ablation}
To fully realize the potential of One Tokenizer, two critical architectural questions must be addressed: (1) What is the optimal sequence granularity for constructing the DNA tokens? and (2) Once integrated, how should the embedding space be optimized during training? We investigate these through systematic ablation studies on both the NT and KEGG benchmark, as detailed below.

\begin{wraptable}{r}{0.55\linewidth}
\vspace{-12pt}
\centering
\caption{Impact of $k$ in $k$-mer tokenization of DNA sequences on NT classification task. ``Voc Size'', ``Seq'' and ``tt'' are short for vocabulary size, sequence length, and training time, respectively.}
\resizebox{0.95\linewidth}{!}{
\begin{tabular}{ c l c c c c c c}
        \hline
        $k$  & Voc Size & Seq & Acc & F1 & MCC & tt (s) \\
        \hline
        7 & 78,125 & 200 & 0.75 & 0.74 & 0.53 & 1688 \\
        6 & 15,625 &200 &  \textbf{0.80} & \textbf{0.78} & \textbf{0.62} & \textbf{521}\\
       4 & 625 & 280 & 0.80 & 0.78 & 0.61 & 725 \\
       3  & 125 & 330 & 0.80 & 0.79 & 0.62 & 841 \\
       2 & 25 & 420 &  0.79& 0.77 & 0.61& 1074  \\
       1 & 5 & 720 & 0.76 & 0.73 & 0.54 & 4597 \\
        \hline
\end{tabular}
}
\label{tab:ablation_vocab_size}
\vspace{-12pt}
\end{wraptable}

\textbf{Effect of sequence granularity ($k$-mer size).} 
We evaluate the impact of sequence granularity by varying the $k$-mer size from 1 to 7 (Tab. \ref{tab:ablation_vocab_size}), identifying $k=6$ as the optimal setting. Note that we do not explore $k \ge 8$ because the vocabulary size grows exponentially ($5^k$), which inevitably leads to severe data sparsity and disrupts the LLM's pre-trained token distribution.

\textbf{Tuning the shared embeddings.} 
We evaluate different embedding update strategies for the shared embedding layer while keeping the LLM backbone fully fine-tuned (Tab. \ref{tab:ablation_finetune}). Although updating either DNA or text embeddings individually yields functional results, jointly tuning both achieves the best performance across all metrics. This confirms that full optimization of the shared embedding space is essential for seamless cross-modal integration.

\begin{table*}[!htbp]
    \centering
    \caption{Impact of embedding fine-tuning strategies on KEGG. \ding{55} and \ding{51} denote frozen and fine-tuning strategies, respectively. }
    \resizebox{0.9\textwidth}{!}{
    \begin{tabular}{|c c |c c c c |c |}
    \hline
    \multirow{2}{*}{DNA Embeds} & \multirow{2}{*}{Text Embeds} & \multicolumn{4}{c|}{Classification} & \multirow{2}{*}{Reasoning Score} \\
    \cline{3-6}
     & & Acc & F1 & Precision & Recall & \\
    \hline
    \ding{51} & \ding{55} &97.92 & 92.67 & 92.86 & 92.50 & 8.55  \\
    \ding{55} & \ding{51} &  86.81 & 75.70 & 74.85 & 77.68 & 7.14 \\
    \ding{51} & \ding{51} & \textbf{99.31 }& \textbf{95.24} & \textbf{96.43} & \textbf{94.64} & \textbf{10.53} \\
    \hline
    \end{tabular}
    }
    \label{tab:ablation_finetune}
    \vspace{-7pt}
\end{table*}

\section{Conclusion}
\label{sec:conclusion}
In this paper, we investigated the relationship between multimodal architecture design and modality gap dynamics. 
We formally characterized the geometric modality gap and theoretically showed that native architectures with a unified vocabulary intrinsically eliminate the initial modality gap, ensuring seamless geometric unity across all Transformer layers.
Empirically, comprehensive DNA-text evaluations confirm that our One Tokenizer achieves seamless integration directly within the LLM's native latent geometry. By consistently outperforming modular baselines, it proves that a unified tokenization space is a fundamentally superior framework for deep biological reasoning.

\textbf{Limitations and future Work.}
This study utilizes discrete symbolic modalities (i.e., DNA and text) to examine the model's internal representation dynamics, effectively bypassing the tokenization noise inherent in continuous signals (e.g., vision and audio). A key future direction is to investigate whether the seamless integration of a unified vocabulary persists when handling the vector quantization artifacts of continuous modalities. Additionally, while our theoretical bounds rigorously characterize the modality gap, exploring the evolution of latent geometry under massive-scale pre-training regimes remains an exciting frontier.

\section*{Acknowledgments and Disclosure of Funding}
We thank the anonymous reviewers.

\bibliographystyle{plainnat}
\bibliography{neurips_2026_ref}


\appendix


\section{Technical appendices and supplementary material}


In the supplementary material, we provide additional details about benchmarks, competitors and more experimental results. 
\begin{itemize}
    \item \textbf{Section \ref{sec:proof}.} We provide proofs for Lemma.\ref{lemma:main_init_gap} and Theorem.\ref{thm:main_deep_gap}. 
    \item \textbf{Section \ref{subsec:experimental_details}}. We provide more experimental details, including datasets, LLM-as-a-judge protocol, and evaluation metrics for domain gap.  
    \item \textbf{Section \ref{subsec:experimental_results}}. We provide more experimental results, including the detailed performance on NT with Qwen3-1.7B and Qwen3-8B, visualization on NT, and one reasoning case study. 
    \item \textbf{Section \ref{subsec:broader_impact}}. We discuss the broader impact. 
\end{itemize}

\subsection{Proofs}
\label{sec:proof}

\setcounter{lemma}{0}
\begin{lemma}[Embedding-level gap]
\label{lemma:init_gap}
Let $V_A$ and $V_B$ denote the modality-specific vocabulary for modalities $A$ and $B$, respectively. The expected modality gaps at the initial embedding layer ($l=0$) of the modular architecture $\Delta_\text{Mod}^{(0)}(\mathcal{H}_A,\mathcal{H}_B)$, and the native architecture $\Delta_\text{Nat}^{(0)}(\mathcal{H}_A,\mathcal{H}_B)$, satisfy:
\[
    \mathbb{E}[\Delta_\text{Mod}^{(0)}(\mathcal{H}_A,\mathcal{H}_B)] > \lim_{|V_A|, |V_B| \to \infty} \mathbb{E}[\Delta_\text{Nat}^{(0)}(\mathcal{H}_A,\mathcal{H}_B)] = 0
\]
\end{lemma}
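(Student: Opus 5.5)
The proof will go through an explicit probabilistic model of the embedding layer, since the statement is only meaningful once the randomness is fixed.

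\textbf{Model.} For the native architecture, every row of the shared embedding matrix $\mathbf{E}$ (text and modality tokens alike) is drawn i.i.d.\ from one distribution $\mathcal{D}$ on $\mathbb{R}^d$. This is standard LLM initialization, e.g.\ $\mathcal{N}(0,\sigma^2 I_d)$, and we assume only a finite positive second moment. For the modular architecture, the modality-$A$ embeddings are $C(E_m(\cdot))$ and the text embeddings come from $E$. We model them as i.i.d.\ draws from two distributions $\mathcal{D}_A$ and $\mathcal{D}_B$ with means $\mu_A \neq \mu_B$. The justification is that independently pretrained encoders and projectors produce a nonzero mean offset, which is the cone effect of \cite{liang2022mind}. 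This is the one substantive modeling assumption, and I will state it explicitly rather than derive it. Under this model, $c_A^{(0)}$ and $c_B^{(0)}$ are the empirical centroids over the vocabularies $V_A$ and $V_B$, and $\sigma_A^2$ and $\sigma_B^2$ are the empirical per-token variances.

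\textbf{Native case.} With common mean $\mu$ and total variance $s^2 = \mathrm{tr}\,\mathrm{Cov}(\mathcal{D})$, the two centroids are independent, so
\[
\mathbb{E}\|c_A - c_B\|_2^2 = s^2\left(\frac{1}{|V_A|} + \frac{1}{|V_B|}\right) \to 0 .
\]
By the law of large numbers, the pooled empirical variance in the denominator converges almost surely to the positive constant $s^2$, or to $s^2/d$ under a per-coordinate convention; either works. To get $\mathbb{E}[\Delta_{\text{Nat}}^{(0)}] \to 0$ and not just convergence in probability, I split on the event $G = \{\text{pooled variance} \ge s^2/2\}$. On $G$, Jensen's inequality gives $\mathbb{E}[\Delta \mathbf{1}_G] \le \sqrt{2/s^2}\,\sqrt{\mathbb{E}\|c_A - c_B\|^2} \to 0$.

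\textbf{The complement $G^c$.} This is the main technical obstacle. The ratio $\Delta$ could be large there, so I need a bound on it. Two routes:
\begin{itemize}
\item \emph{Deterministic bound.} Since $c_A$ and $c_B$ are averages of the same points used in the variances, Cauchy--Schwarz bounds $\|c_A - c_B\|$ by a constant multiple of the pooled spread plus the centroid offset. In the worst case this only yields $\Delta \le O\big(\sqrt{|V_A| + |V_B|}\big)$.
\item \emph{Concentration.} For Gaussian (or sub-Gaussian) $\mathcal{D}$, a chi-square lower-tail bound gives $\Pr(G^c) \le e^{-c(|V_A| + |V_B|)d}$.
\end{itemize}
Combining the two gives $\mathbb{E}[\Delta \mathbf{1}_{G^c}] \to 0$. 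If I want to avoid the Gaussian assumption, I would instead define the expected gap with population variances in the denominator, as Definition 1's phrase ``expected representations'' suggests. The ratio is then deterministic in its denominator and the $G^c$ issue disappears. I will present this version as the main line.

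\textbf{Modular case.} By Jensen's inequality, $\mathbb{E}\|c_A - c_B\| \ge \|\mathbb{E}(c_A - c_B)\| = \|\mu_A - \mu_B\| > 0$. Dividing by the finite pooled population variance, which is bounded as the paper asserts when discussing Definition 1, gives
\[
\mathbb{E}[\Delta_{\text{Mod}}^{(0)}] \ge \frac{\|\mu_A - \mu_B\|}{\sqrt{(\sigma_A^2 + \sigma_B^2)/2}} > 0 .
\]
This bound is uniform in vocabulary size, so it exceeds the native limit of $0$, which proves the strict inequality.
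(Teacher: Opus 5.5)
Your main line is correct and has the same skeleton as the paper's proof. Both use i.i.d.\ embedding rows and centroids as sample means, and both split $\mathbb{E}\|c_A^{(0)}-c_B^{(0)}\|_2^2$ into a sampling term of order $1/|V_A|+1/|V_B|$ plus $\|\mu_A-\mu_B\|_2^2$. In the native case a single shared distribution cancels the offset; in the modular case both rely on the assumption $\mu_A\neq\mu_B$.

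The difference is in rigor. The paper invokes concentration of measure to replace $\mathbb{E}\|c_A-c_B\|_2$ by $\sqrt{\mathbb{E}\|c_A-c_B\|_2^2}$, and its displayed formulas for $\mathbb{E}[\Delta^{(0)}]$ drop the normalizing denominator. Your two one-sided Jensen inequalities go exactly in the directions needed: $\mathbb{E}\|X\|\le\sqrt{\mathbb{E}\|X\|^2}$ for the native limit, and $\mathbb{E}\|X\|\ge\|\mathbb{E}X\|$ for the modular bound. With a population-variance denominator they turn the paper's heuristic into a proof that needs only finite second moments and positive variance, not Gaussianity.

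Your modular bound is also uniform in $|V_A|,|V_B|$, and that is where the content lies. At fixed finite vocabularies, $\mathbb{E}\|c_A-c_B\|_2>0$ holds for any nondegenerate embedding distribution, native included. So the paper's appeal to sampling variance makes the strict inequality nearly automatic; only the offset $\mu_A\neq\mu_B$ gives it substance.

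One claim in your optional empirical-variance detour is false. No deterministic bound $\Delta\le O(\sqrt{|V_A|+|V_B|})$ holds when the denominator uses empirical intra-modality variances. Place all $A$-points within $\epsilon$ of $p$ and all $B$-points within $\epsilon$ of $q\neq p$. Then $\|c_A-c_B\|_2\ge\|p-q\|_2-2\epsilon$, while the denominator is at most $\epsilon$ (the sample mean minimizes squared deviations), so $\Delta$ is unbounded.

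For Gaussian embeddings you do not need that bound. The sample means are independent of the within-modality scatter, so $\mathbb{E}[\Delta]$ factors as $\mathbb{E}\|c_A-c_B\|_2$ times the expected inverse square root of the denominator. That second factor is finite and tends to the reciprocal population standard deviation, so no event split is needed. Since your main line uses population variances, the conclusion is unaffected.
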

\begin{proof}
We evaluate the expected modality gap at the initial embedding layer $l=0$. Following standard initialization practices in large language models, we assume the initial continuous representations (token embeddings) are drawn from independent isotropic Gaussian distributions. 

Based on our general metric, we first evaluate the modular architecture. Let the embeddings for modalities $A$ and $B$ be initialized as $e_{A} \sim \mathcal{N}(\mu_A, \sigma_A^2 I_d)$ and $e_{B} \sim \mathcal{N}(\mu_B, \sigma_B^2 I_d)$. The centroids $c_A^{(0)}$ and $c_B^{(0)}$, defined as the sample means of these embeddings, therefore satisfy:
\[
    c_A^{(0)} \sim \mathcal{N}\left(\mu_A, \frac{\sigma_A^2}{|V_A|} I_d\right), \quad c_B^{(0)} \sim \mathcal{N}\left(\mu_B, \frac{\sigma_B^2}{|V_B|} I_d\right)
\]
For the $d$-dimensional space, the expected squared $L_2$ norm between the two independent centroids is the sum of their variances and the squared distance between their expected means:
\begin{align*}
    \mathbb{E}[\|c_A^{(0)} - c_B^{(0)}\|_2^2] &= \sum_{i=1}^d \left[ \text{Var}(c_{A,i}^{(0)} - c_{B,i}^{(0)}) + (\mathbb{E}[c_{A,i}^{(0)} - c_{B,i}^{(0)}])^2 \right] \\
    &= d \left( \frac{\sigma_A^2}{|V_A|} + \frac{\sigma_B^2}{|V_B|} \right) + \|\mu_A - \mu_B\|_2^2
\end{align*}
Due to the concentration of measure in high-dimensional spaces ($d \gg 1$) \cite{vershynin2018high}, we approximate the expectation of the $L_2$ norm as $\mathbb{E}[\|c_A^{(0)} - c_B^{(0)}\|_2] \approx \sqrt{\mathbb{E}[\|c_A^{(0)} - c_B^{(0)}\|_2^2]}$. Substituting this into our gap metric yields:
\[
    \mathbb{E}[\Delta_\text{Mod}^{(0)}(\mathcal{H}_A, \mathcal{H}_B)] \approx \sqrt{ d \left( \frac{\sigma_A^2}{|V_A|} + \frac{\sigma_B^2}{|V_B|} \right) + \|\mu_A - \mu_B\|_2^2 } > 0
\]
Note that for any finite vocabularies $V_A$ and $V_B$, the sampling variance inherently prevents the gap from vanishing. Furthermore, because modular embeddings are parameterized in separate sub-spaces, they generally exhibit shifted means ($\mu_A \neq \mu_B$), making the initial gap structurally bounded away from zero.

Next, we evaluate the native architecture (e.g., One Tokenizer). With a shared embedding space, all disjoint tokens from $V_A$ and $V_B$ form a unified vocabulary $V_\text{Nat} = V_A \cup V_B$, with embeddings drawn from a single shared distribution $e_\text{Nat} \sim \mathcal{N}(\mu_\text{Nat}, \sigma_\text{Nat}^2 I_d)$. The centroids $c_A^{(0)}$ and $c_B^{(0)}$ are sample means from this identical space:
\[
    c_A^{(0)} \sim \mathcal{N}\left(\mu_\text{Nat}, \frac{\sigma_\text{Nat}^2}{|V_A|} I_d\right), \quad c_B^{(0)} \sim \mathcal{N}\left(\mu_\text{Nat}, \frac{\sigma_\text{Nat}^2}{|V_B|} I_d\right)
\]
The expected squared $L_2$ norm is now purely driven by sampling variance, as the mean difference $\|\mu_\text{Nat} - \mu_\text{Nat}\|_2^2$ perfectly cancels out:
\[
    \mathbb{E}[\|c_A^{(0)} - c_B^{(0)}\|_2^2] = d \cdot \sigma_\text{Nat}^2 \left( \frac{1}{|V_A|} + \frac{1}{|V_B|} \right)
\]
Substituting this into the general metric and taking the asymptotic limit as vocabulary sizes grow ($|V_A|, |V_B| \to \infty$), the sample means independently converge to the true population mean $\mu_\text{Nat}$, yielding:
\[
    \lim_{|V_A|, |V_B| \to \infty} \mathbb{E}[\Delta_\text{Nat}^{(0)}(\mathcal{H}_A, \mathcal{H}_B)] \approx \lim_{|V_A|, |V_B| \to \infty} \sqrt{d \sigma_\text{Nat}^2 \left( \frac{1}{|V_A|} + \frac{1}{|V_B|} \right)} = 0
\]
This mathematically demonstrates that while modular architectures maintain a strict positive gap $\mathbb{E}[\Delta_\text{Mod}^{(0)}] > 0$, the native architecture structurally forces the expected initial modality gap to vanish. This concludes the proof.
\end{proof}

\setcounter{theorem}{0}
\begin{theorem}[Layer-wise gap preservation]
\label{thm:deep_gap}
Given the initial condition established in Lemma \ref{lemma:main_init_gap}, and assuming the layer-wise transformation functions are Lipschitz continuous, the expected modality gaps $\Delta_\text{Mod}^{(l)}$ and $\Delta_\text{Nat}^{(l)}$ at every subsequent layer $l$ ($l > 0$) throughout the forward pass satisfy:
\[
    \mathbb{E}[\Delta_\text{Mod}^{(l)}(\mathcal{H}_A, \mathcal{H}_B)] > \lim_{|V_A|, |V_B| \to \infty} \mathbb{E}[\Delta_\text{Nat}^{(l)}(\mathcal{H}_A, \mathcal{H}_B)] = 0
\]
\end{theorem}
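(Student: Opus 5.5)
We argue by induction on the layer index $l$, with Lemma \ref{lemma:init_gap} as the base case $l=0$. Write $f^{(l)}$ for the $l$-th layer map, so that $h^{(l)} = f^{(l)}(h^{(l-1)})$, and let $L_l$ be its Lipschitz constant. We model every layer as the same deterministic map applied to both modalities; attention mixing can be absorbed into the assumption that the per-token map is Lipschitz in its input. We treat the two inequalities separately, because they need different kinds of control.

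\textbf{Native side (upper bound).} In the native architecture, at $l=0$ the tokens of $A$ and $B$ are exchangeable draws from the single distribution $\mathcal{N}(\mu_\text{Nat},\sigma_\text{Nat}^2 I_d)$. Since $f^{(l)}$ is a fixed measurable map, the pushforwards of $A$-tokens and $B$-tokens through $f^{(1)}\circ\cdots\circ f^{(l)}$ are identically distributed with a common law $P^{(l)}$. The two layer-$l$ centroids are therefore empirical means of $|V_A|$ and $|V_B|$ i.i.d.\ samples from $P^{(l)}$. The composed map has Lipschitz constant $\prod_{j\le l} L_j$, so $P^{(l)}$ has finite variance of order $(\prod_{j\le l} L_j)^2 d\,\sigma_\text{Nat}^2$. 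Repeating the computation of Lemma \ref{lemma:init_gap} then gives
\[
\mathbb{E}\|c_A^{(l)}-c_B^{(l)}\|_2^2 \le \Big(\prod_{j\le l}L_j\Big)^2 d\,\sigma_\text{Nat}^2\Big(\tfrac{1}{|V_A|}+\tfrac{1}{|V_B|}\Big)\to 0 .
\]
For the normalized gap $\Delta$ we also need the denominator to stay away from zero. We invoke the paper's standing assumption that normalization layers keep $(\sigma^{(l)})^2$ strictly positive and bounded, and then apply Jensen ($\mathbb{E}\|\cdot\|\le\sqrt{\mathbb{E}\|\cdot\|^2}$) to conclude that $\lim \mathbb{E}[\Delta^{(l)}_\text{Nat}]=0$.

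\textbf{Modular side (lower bound).} The goal is $\mathbb{E}[\Delta^{(l)}_\text{Mod}]>0$. Once this holds, the strict inequality follows immediately against the zero limit. The naive bound, that the layer-$l$ centroid gap is at most $\prod_j L_j$ times the initial gap, runs in the wrong direction. We would therefore use one of two lower-bound routes. The first is a co-Lipschitz (bi-Lipschitz) assumption on each layer, $\|f(x)-f(y)\|\ge \ell_l\|x-y\|$ with $\ell_l>0$. This is plausible for residual blocks $x+g(x)$ with $\mathrm{Lip}(g)<1$, where $\ell_l \ge 1-\mathrm{Lip}(g)$. The second, which needs no geometric assumption, is to note that the modular initial laws $\mathcal{N}(\mu_A,\sigma_A^2I)$ and $\mathcal{N}(\mu_B,\sigma_B^2I)$ are distinct. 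Because a residual block with $\mathrm{Lip}(g)<1$ is injective, their pushforwards remain distinct, and the finite-sample centroid difference then has nondegenerate Gaussian-like fluctuation. In that case $\mathbb{E}\|c_A^{(l)}-c_B^{(l)}\|>0$ already holds for finite vocabularies, mirroring the sampling-variance term in Lemma \ref{lemma:init_gap}.

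\textbf{Main obstacle.} The expected difficulty is this modular lower bound. Lipschitz continuity by itself cannot prevent a layer from collapsing the two centroids together; a constant map is Lipschitz. Any rigorous statement must therefore either strengthen the hypothesis to bi-Lipschitz or injective residual layers, or rest on the strictly positive sampling-variance term. We would take the variance route first. It only requires the layer-$l$ centroid difference to have a law that is not a point mass at $0$, and this follows from injectivity plus continuity applied to absolutely continuous Gaussian inputs.
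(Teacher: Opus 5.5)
Your proposal is correct under the same idealization the paper uses (each block is one fixed per-token map applied to both modalities), but it takes a different route.

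\textbf{The paper's route.} The paper pushes the \emph{centroids} through the network. It writes $c_A^{(l)} = f_\theta^{(l)}(c_A^{(l-1)})$ and iterates a local Lipschitz bound to get $\|c_A^{(l)}-c_B^{(l)}\|_2 \le B^l\|c_A^{(0)}-c_B^{(0)}\|_2$. The native limit then comes from plugging in the zero of Lemma~\ref{lemma:init_gap}. Modular positivity comes from citing bi-Lipschitz and non-collapse results for residual blocks.

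\textbf{Your native argument.} You push the \emph{token law} forward instead. Native $A$- and $B$-tokens share one initial law, hence one layer-$l$ law, so the population centroids coincide exactly. Only the $O(1/|V_A|+1/|V_B|)$ sampling term survives, and Lipschitz continuity is used only to bound the second moment of the pushforward. This makes your argument valid for nonlinear layers; the paper's identity $c^{(l)}=f(c^{(l-1)})$ holds only for affine $f$. You also control the denominator of $\Delta$, which the paper ignores when it passes from centroid distances to $\Delta^{(l)}\le B^l\Delta^{(0)}$.

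\textbf{Your modular argument.} You are right that Lipschitz continuity alone cannot give a lower bound; the paper tacitly adds bi-Lipschitzness through its citations. Your variance route proves the literal strict inequality, and it needs less than you invoke. Non-degeneracy of the single-token law at layer $l$, which your positive-variance assumption already supplies, plus independence of the two centroids is enough. Distinctness of the pushforward laws plays no role.

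However, the same argument gives $\mathbb{E}[\Delta^{(l)}_\text{Nat}]>0$ at every finite vocabulary. So it establishes only a finite-versus-infinite-vocabulary separation, not a persistent structural gap. The paper aims at the structural claim, but pointwise co-Lipschitzness does not transfer to population centroids of a nonlinear pushforward. For example, the coordinatewise residual map $x\mapsto x-a|x|$ with $0<a<1$ is bi-Lipschitz, yet for a suitable $a$ it makes the means of the pushforwards of $\mathcal{N}(0,1)$ and $\mathcal{N}(0.1,4)$ coincide. Your first (bi-Lipschitz) route would therefore need more than you state.

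\textbf{Where the idealization bites.} The step ``identical pushforward laws'' carries the weight of the per-token modeling assumption. With causal attention over $[\text{Emb}(S);\text{Emb}(Q)]$, the two modalities see different contexts, and that step would no longer go through. The paper's proof rests on the same idealization.
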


\begin{proof}
Let $f_\theta^{(l)}$ be the $l$-th Transformer block. While standard self-attention is not globally Lipschitz continuous \cite{kim2021lipschitz}, Layer Normalization strictly projects representations onto a bounded hypersphere \cite{ba2016layer}. Because this forms a compact domain, the continuously differentiable transformer block $f_\theta^{(l)}$ is mathematically guaranteed to satisfy local Lipschitz continuity with bound $B$:
\[
    \| c_A^{(l)} - c_B^{(l)} \|_2 = \| f_{\theta}^{(l)}(c_A^{(l-1)}) - f_{\theta}^{(l)}(c_B^{(l-1)}) \|_2 \leq B \| c_A^{(l-1)} - c_B^{(l-1)} \|_2
\]
By induction, the terminal gap is bounded by:
\[
    \| c_A^{(l)} - c_B^{(l)} \|_2 \leq B^l \| c_A^{(0)} - c_B^{(0)} \|_2
\]
Furthermore, deep Transformer blocks with residual connections are mathematically proven to act as bi-Lipschitz, information-preserving mappings that strictly prevent representation collapse \cite{dong2021attention, behrmann2019invertible}. Consequently, the distinct initial manifolds cannot trivially collapse into a single degenerate point. Thus, by applying Lemma \ref{lemma:init_gap}, the expected gap of the modular architecture at layer $l$ is lower-bounded away from zero and remains strictly positive:
\[
    0 < \mathbb{E} \left[ \Delta_{\text{Mod}}^{(l)}(\mathcal{H}_A, \mathcal{H}_B) \right] \leq B^l \cdot \mathbb{E}[\Delta_{\text{Mod}}^{(0)}(\mathcal{H}_A, \mathcal{H}_B)]
\]
For the native architecture, substituting the asymptotic result from Lemma \ref{lemma:init_gap}:
\[
    \lim_{|V_A|, |V_B| \to \infty} \mathbb{E} \left[ \Delta_{\text{Nat}}^{(l)}(\mathcal{H}_A, \mathcal{H}_B)\right] \leq B^l \cdot 0 = 0
\]
Since the distance metric is non-negative, the expected gap must be exactly zero. This concludes the proof.
\end{proof}

\subsection{More Experimental Details}
\label{subsec:experimental_details}
\subsubsection{Datasets}
\label{appendix:datasets}
\textbf{Nucleotide Transformer Benchmark (NT)} is a collection of genomic datasets containing a varied panel of 18 small-sized datasets and has been extensively evaluated in multiple studies of DNA foundation models. To make it suitable for DAN-language models, \cite{de2025multimodal} constructed an instruction-version, which contains a question-answer instruction of each DNA sequence. Specifically, they curated dozens of English questions and answers for each task and sampled a question-answer pair per input DNA sequence, where a special DNA token placeholder <DNA> is used when referring to the input DNA sequence and the  answer contains the classification label. We use the same train/test sets to evaluate all comparison models. 

\textbf{KEGG} is mainly designed to evaluate multi-step mechanistic reasoning \cite{fallahpour2025bioreason}. It contains 1,449 QA pairs, representing 173 variant networks and 37 unique diseases, where each question describes variant details, network definition, and gene description, and each answer provides concise mechanism-disease associations. We follow \cite{fallahpour2025bioreason} and use 1159/290 for train+validation/testing.

\begin{figure}[!htp]
    \centering
    \includegraphics[width=\linewidth]{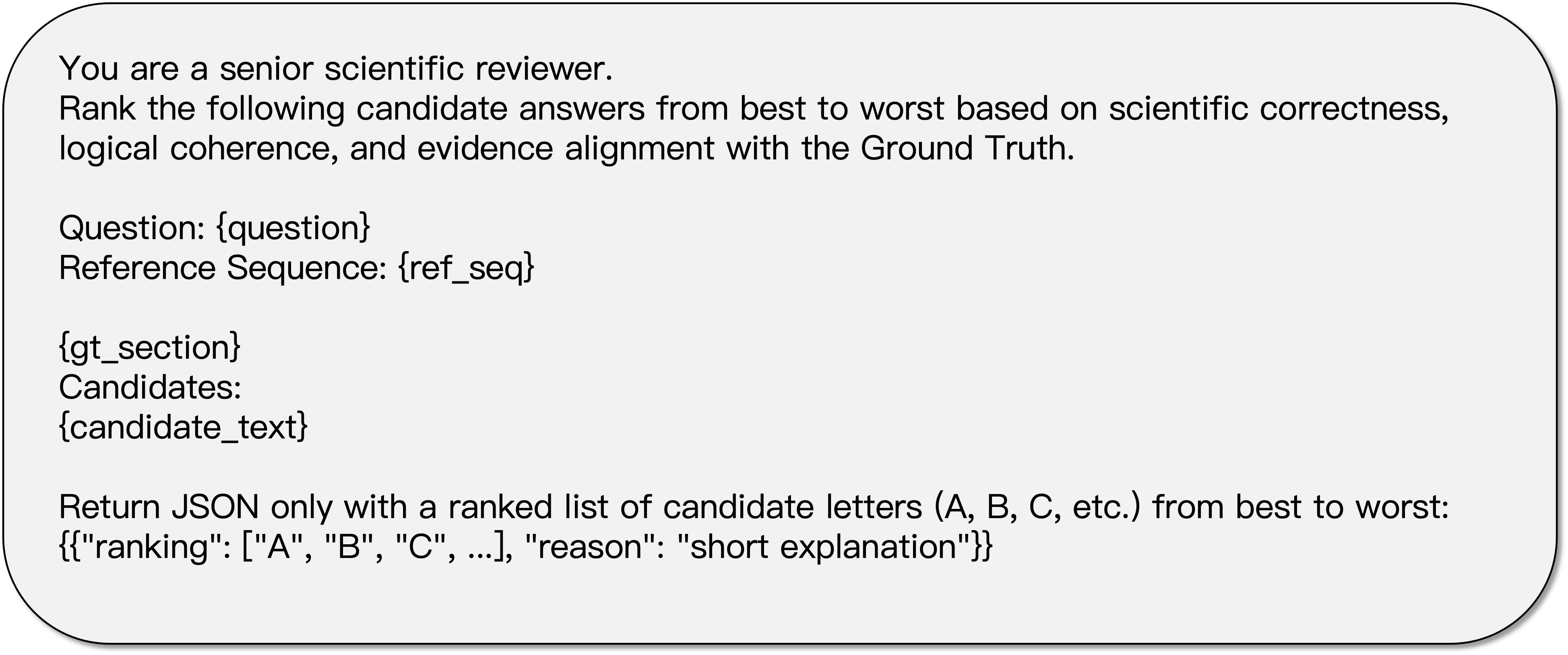}
    \caption{The prompt we used in our LLM-as-a-judge evaluation protocol.}
    \label{fig:llm-prompt}
\end{figure}

\subsubsection{LLM-as-a-Judge Protocol}
\label{appendix:llm}
We propose a novel LLM-as-a-judge protocol to quantitatively compare the reasoning performance of all methods inspired from \cite{chen2024mllm}. It assesses outputs based on relative ranking and aggregates judgments from a panel of $K$ distinct LLMs to mitigate evaluation bias. Specifically, for a given sample $x_i$, each LLM judge compares the reasoning outputs of all evaluated methods against the ground truth and provides a strict ranking, based on which we assign the ordinal score. Let $r_{i,k}(m)$ denote the rank assigned to method $m$ by judge $k$, where a lower value indicates a better rank. We convert the rank into scores using a predefined mapping $\mathrm{S}$. For example, if there are 5 outputs, scores could be $\mathrm{S}=\{5,4,3,2,1\}$. 
The score assigned by judge $k$ to method $m$ on sample $i$ becomes: $s_{i,k}(m)=\mathrm{S}(r_{i,k}(m))$. We then aggregate scores across $K$ judges and finally compute the overall performance of each method by averaging scores across the dataset: 
\begin{equation}
    s_{i}(m)=\sum_{k=1}^K s_{i,k}(m), S(m)=\frac{1}{N}\sum_{i=1}^N s_i(m)
\end{equation}
where $N$ denotes the number of samples. In our experiments, we use $K=3$ judge models - ``GPT-5.1'' ``Gemini-3.1-Pro-Preview'', and ``o3''- and assign scores $\{5,4,3,2,1\}$ based on ranking positions. The final score $S(m)$ reflects the relative reasoning quality of method $m$ over the dataset. This protocol ensures a fair, multi-perspective, and robust quantitative assessment of reasoning abilities. We show the detailed prompts in Fig.\ref{fig:llm-prompt}.

\subsubsection{Evaluation Metrics for Domain Gap}
\label{appendix:metrics}
\paragraph{Cohen Distance.}

To quantify the separation between DNA and text token embeddings while mitigating high-dimensional noise, we compute Cohen distance along the axis of maximal modality divergence within a denoised latent subspace. 

Let $\mathbf{X}^{(d)} \in \mathbb{R}^{n_d \times H}$ and $\mathbf{X}^{(t)} \in \mathbb{R}^{n_t \times H}$ denote the embedding matrices for DNA and text tokens, respectively. We first perform a joint Principal Component Analysis (PCA) on the concatenated embeddings $\mathbf{X} = [\mathbf{X}^{(d)}; \mathbf{X}^{(t)}]$ and project the tokens onto the first two principal components ($D=2$) to capture the dominant global geometry. Let $\hat{\mathbf{x}}_i \in \mathbb{R}^2$ denote the projected coordinates of a token in this joint PC1--PC2 plane.

We define the modality centroids, $\bar{\mathbf{x}}_{\text{DNA}}$ and $\bar{\mathbf{x}}_{\text{text}}$, as the empirical means of the projected tokens. The separation direction is defined as the unit vector $\mathbf{u}$ connecting these centroids:
\begin{equation}
\mathbf{u} = \frac{\bar{\mathbf{x}}_{\text{text}} - \bar{\mathbf{x}}_{\text{DNA}}}{\|\bar{\mathbf{x}}_{\text{text}} - \bar{\mathbf{x}}_{\text{DNA}}\|_2}.
\end{equation}
Each 2D token $\hat{\mathbf{x}}$ is then mapped to a univariate score $z = \hat{\mathbf{x}}^\top \mathbf{u}$ by projection onto $\mathbf{u}$, yielding two scalar distributions: $\mathcal{Z}_d = \{z^{(d)}_i\}_{i=1}^{n_d}$ and $\mathcal{Z}_t = \{z^{(t)}_j\}_{j=1}^{n_t}$.

The Cohen distance is calculated as the standardized mean difference:
\begin{equation}
d = \frac{\bar{z}_{\text{text}} - \bar{z}_{\text{DNA}}}{s_{\text{pooled}}}, \quad s_{\text{pooled}} = \sqrt{\frac{(n_d - 1)v_d + (n_t - 1)v_t}{n_d + n_t - 2}}
\end{equation}
where $\bar{z}$ and $v$ denote the sample mean and Bessel-corrected variance of the projected scores, respectively. This statistic measures the effect size of the modality gap along the most informative shared variance axes, providing a robust estimate of representation alignment.

\paragraph{KNN Fraction.}
After mapping DNA sequences and text sequences into a common low‑dimensional embedding space via PCA, we evaluate how well the two modalities intermix locally in PC1-PC2. For each data point $\mathbf{x}_i$ (which can be either a DNA sequence or a text document), we identify its $k$ nearest neighbors in the embedding space using Euclidean distance. Among these $k$ neighbors, we count how many belong to the \emph{opposite} modality (i.e., text neighbors for a DNA point, or DNA neighbors for a text point). The KNN fraction for point $\mathbf{x}_i$ is defined as 
\[
f_i = \frac{\text{ opposite-modality neighbours}}{k}
\]

A value $f_i \approx N_i/\sum_j{N}_j$ indicates that the point is predominantly surrounded by samples from the other modality, implying good cross‑modal alignment; a value $f_i \approx 0$ suggests strong modality‑specific clustering. The overall modality mix score of the hybrid model is given by harmonic mean of the per-modality mean KNN fractions $\bar{f}_{\text{DNA}}$ and $\bar{f}_{\text{text}}$. The harmonic mean is given by formula:
\[
H = \frac{2}{\frac{1}{\bar{f}_{\text{DNA}}} + \frac{1}{\bar{f}_{\text{text}}}} = \frac{2\,\bar{f}_{\text{DNA}}\,\bar{f}_{\text{text}}}{\bar{f}_{\text{DNA}}+\bar{f}_{\text{text}}}.
\]
This provides a single modality mix score that equally weights the mixing quality of both modalities.

\subsection{More Experimental Results}
\label{subsec:experimental_results}
\subsubsection{Comparison with State-of-the-Arts on NT}
\label{appendix:sota_nt}
Tab.\ref{tab:sota_nt} lists the detailed results of all comparing methods on NT, when using Qwen3-1.7B as the LLM backbone. Across models with 1.7B parameters, our method consistently ranks among the top-performing approaches, achieving either the best or second-best results across evaluated tasks. 

\begin{table}[!htbp]
    \centering
    \begin{tabular}{l c c c c}
        \hline
        \multirow{2}{*}{\textbf{Dataset}}  & \textbf{ChatNT*} & \textbf{BioReason}  & \textbf{SeqCLIP}&  \textbf{One Tokenizer}\\
        &(1.7B) & (1.7B)  &(1.7B) &  (1.7B)\\
        \hline
        \textit{Histone, and Enhancer} \\
        H3 histone  & 0.74 & 0.69  & 0.68& 0.76 \\
        H3K14ac  &0.81  &0.47  & 0.69  & 0.51 \\
        H3K36me3  & 0.10 & 0.46 & 0.51 & 0.58 \\
        H3K4me1 & 0.43 & 0.37 & 0.39  &0.47 \\
        H3K4me2 & 0.57  & 0.32  & 0.51  & 0.32 \\
        H3K4me3  & 0.66&0.34 & 0.56 &  0.34 \\
        H3K79me3 &0.18& 0.52 & 0.57 & 0.59 \\
        H3K9ac & 0.62& 0.51& 0.59 & 0.51 \\
        H4 histone& 0.76 & 0.77 & 0.75& 0.80 \\
        H4ac & 0.75 &0.43 & 0.68 & 0.46\\
        Enhancers & 0.04 &0.13 & 0.07 &0.49 \\
        Enhancer types & 0.00& -0.01& 0.02& 0.32\\
        \hline
        \textit{Promoter} \\
        Promoters  & 0.89 &0.88 &  0.89 & 0.87 \\
        Non-TATA promoters &0.90 & 0.88 & 0.89 & 0.88 \\
        TATA promoters & 0.90  & 0.85&  0.90& 0.86\\
        \hline
        \textit{Splice Site} \\
        Splice sites &  0.00 & 0.38  & 0.00 &0.72 \\
        Splice acceptors  & 0.76 & 0.02 &  0.56 & 0.88\\
        Splice donors  & 0.76 & 0.75  & 0.74& 0.82 \\
        \hline
        Average & 0.55 & 0.54& 0.56 & \textbf{0.62} \\
        \hline
    \end{tabular}
     \caption{Comparison with existing DNA-text models on NT.  Larger values indicate better performance. We follow the practice in ChatNT, training models on all tasks simultaneously and choosing MCC as the evaluation metric. $*$ means that we re-implement ChatNT using Qwen3-1.7B as the LLM backbone.}
    \label{tab:sota_nt}
\end{table}

\subsubsection{Visualization on NT}
\label{appendix:vis_nt}
Fig.\ref{fig:visualization_nt} illustrates the gene and text embeddings at 14-th layer on NT. Clearly, our One Tokenizer has a smaller gap than all modular competitors.
\begin{figure}[!htp]
    \centering
    \includegraphics[width=\linewidth]{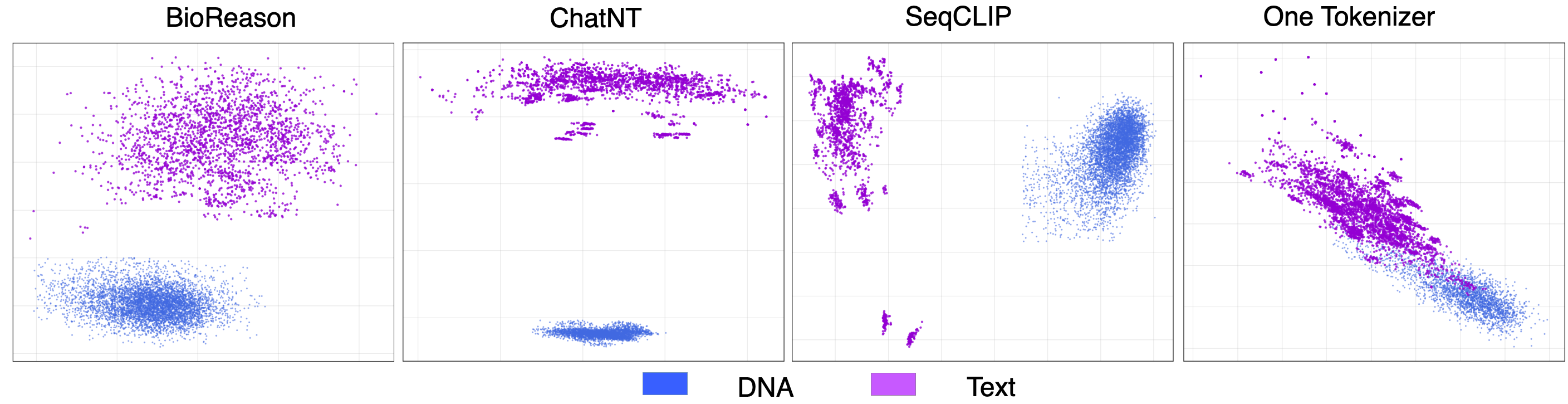}
    \caption{Visualizations of gene and text embeddings at 14-th layer on NT. Clearly, our One Tokenizer has a smaller gap than all modular competitors.}
    \label{fig:visualization_nt}
    \vspace{-5pt}
\end{figure}

\subsubsection{Reasoning Results}
\label{appendix:reasoning_results}
\begin{figure*}[!htp]
    \centering
    \includegraphics[width=\linewidth]{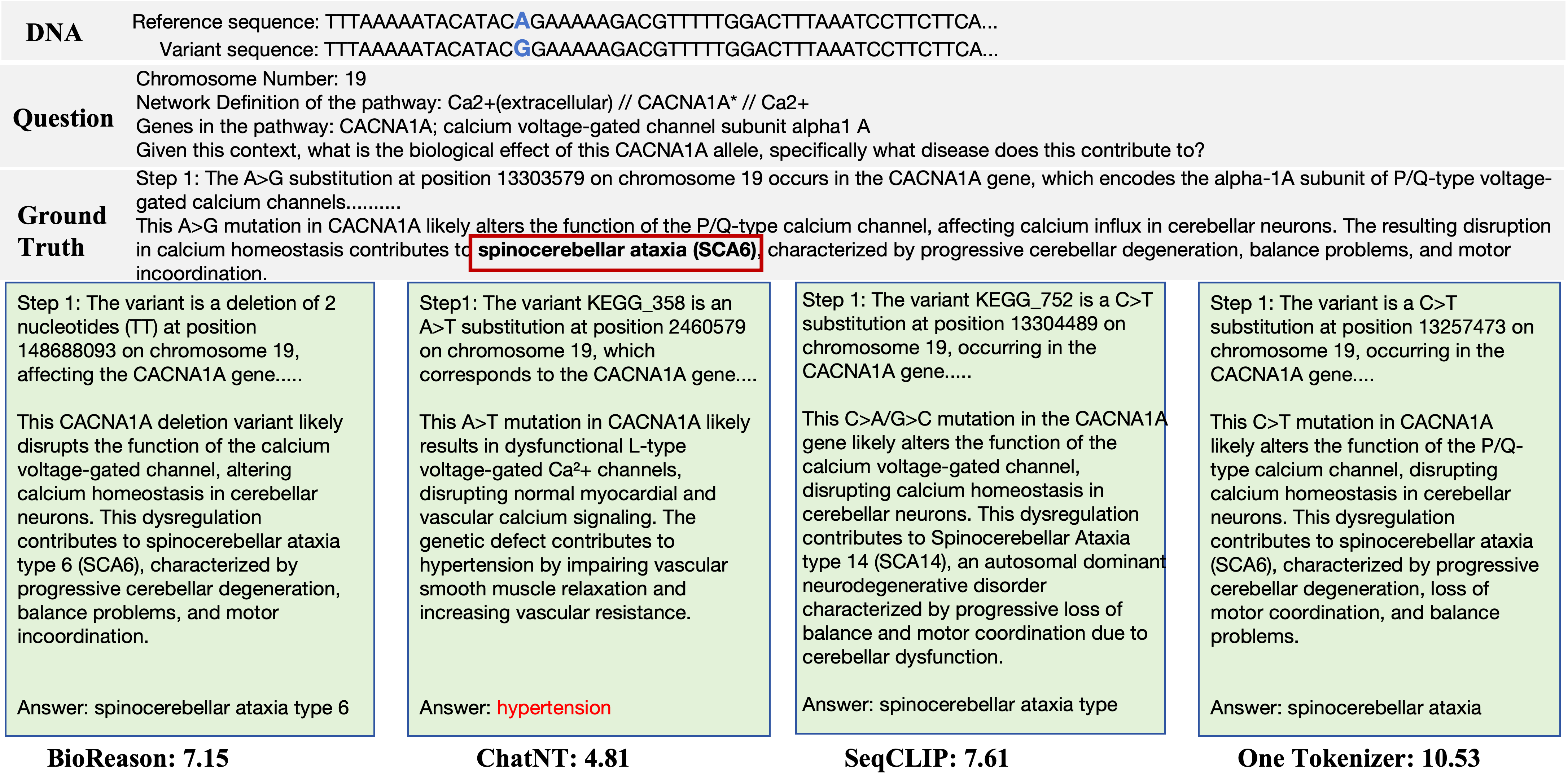}
    \caption{One KEGG reasoning case study. The red box indicates the ground truth disease; red text marks incorrect answers, and the numbers after each method denote the reasoning score.}
    \label{fig:qualitative_kegg}
\end{figure*}

Fig.\ref{fig:qualitative_kegg} presents an \emph{example} comparing the reasoning output using different methods querying about the biological effect of a CACNA1A allele on Chromosome 19, given its associated pathway information.  The following provides the detailed reasoning output of each method. 

It reveals different failure modes based on our evaluation criteria. 
BioReason identifies the correct disease, but its explanation mainly relies on gene–disease associations, with limited semantic detail and little mechanistic support. This results in lower scores in semantic similarity and completeness despite relatively fine logical consistency. 
ChatNT shows a more fundamental failure: it misidentifies the biological role of the target gene, which leads to domain-inconsistent reasoning across steps and ultimately generates an incorrect disease prediction, although the overall reasoning appears fluent. 
SeqCLIP produces long, step-by-step explanations with relatively coherent narratives; however, incorrect disease attribution and unsupported mechanistic assumptions significantly reduce its semantic alignment to the ground truth reasoning and its overall completeness. 
One Tokenzier operates directly on constructing biologically meaningful tokens, rather than relying on late-stage embedding alignment. Such an early-stage integration allows the model to follow coherent biological reasoning paths from molecular function to cellular context and disease phenotype, instead of handling these aspects as loosely related signals. Accordingly, One Tokenizer demonstrates strong and balanced performance on all three evaluation criteria.

\begin{itemize}
    \item \textbf{BioReason: }
        \begin{itemize}
            \item Step 1: The variant is a deletion of 2 nucleotides (TT) at position 148688093 on chromosome 19, affecting the CACNA1A gene.
            \item Step 2: CACNA1A encodes the alpha-1A subunit of the P/Q-type voltage-gated calcium channel, which is highly expressed in cerebellar Purkinje cells.
            \item Step 3: This deletion likely causes a frameshift or loss of a codon, potentially resulting in a truncated or dysfunctional calcium channel protein.
            \item Step 4: Dysfunctional P/Q-type calcium channels alter calcium influx into neurons, disrupting intracellular calcium homeostasis.

            \item Step 5: Altered calcium signaling affects neurotransmitter release at synapses, particularly in cerebellar circuits where CACNA1A is highly expressed.
            
            \item Step 6: Disrupted calcium homeostasis leads to Purkinje cell dysfunction and eventual neurodegeneration in the cerebellum.
            
            \item Step 7: The cerebellum is critical for motor coordination and balance, so cerebellar degeneration manifests as progressive ataxia.
            
            \item Step 8: This mutation pattern is consistent with spinocerebellar ataxia type 6 (SCA6), which is known to be caused by expansions of CAG trinucleotide repeats in CACNA1A.
            
            \item Step 9: The clinical presentation includes progressive loss of balance, coordination deficits, and eventual widespread cerebellar atrophy.
            
            \item Step 10: The disease follows an autosomal dominant inheritance pattern, with anticipation of expanded CAG repeats over time.
            
            \item This CACNA1A deletion variant likely disrupts the function of the calcium voltage-gated channel, altering calcium homeostasis in cerebellar neurons. This dysregulation contributes to spinocerebellar ataxia type 6 (SCA6), characterized by progressive cerebellar degeneration, balance problems, and motor incoordination. 
            
            Anwser:  spinocerebellar ataxia type 6 (SCA6)
        \end{itemize}

    \item \textbf{ChatNT:}
        \begin{itemize}
            \item Step 1: The variant KEGG-358 is an A$>$T substitution at position 2460579 on chromosome 19, which corresponds to the CACNA1A gene.
            
            \item  Step 2: This mutation likely alters the amino acid sequence of the calcium voltage-gated channel subunit alpha-1A protein (CACNA1A).
            
            \item Step 3: CACNA1A encodes a crucial component of L-type voltage-gated Ca²+ channels in cardiac and smooth muscle cells.
            
            \item Step 4: The mutation likely causes a point mutation that disrupts normal ion channel function or increases sensitivity to calcium influx regulation.
            
            \item Step 5: Dysregulation of extracellular calcium homeostasis via the CACNA1A pathway leads to abnormal intracellular calcium signaling.
            
            \item Step 6: Calcium dysregulation affects cellular processes including contraction, relaxation, and excitation in vascular smooth muscle and cardiomyocytes.
            
            \item Step 7: Disrupted calcium dynamics may cause impaired vasomotor tone control mechanisms in peripheral vessels.
            
            \item Step 8: Impaired endothelial NO production from arteriolar smooth muscle could contribute to reduced blood flow resistance.
            
            \item Step 9: Decreased vascular contractility with altered Ca²+ handling might lead to increased venous pressure and volume overload.
            
            \item Step 10: These physiological changes can manifest as chronic hypotension, edema, and pulmonary congestion.
            
            \item Step 11: The resulting cardiovascular instability suggests progressive development of hypertension and ischemic heart disease.
            
            \item This A>T mutation in CACNA1A likely results in dysfunctional L-type voltage-gated Ca²+ channels, disrupting normal myocardial and vascular calcium signaling. The genetic defect contributes to hypertension by impairing vascular smooth muscle relaxation and increasing vascular resistance. 
            
            Answer: hypertension 
        \end{itemize}

    \item \textbf{SeqCLIP: }
        \begin{itemize}
            \item Step 1: The variant KEGG-752 is a C$>$T substitution at position 13304489 on chromosome 19, occurring in the CACNA1A gene. 
            
            \item  Step 2: CACNA1A encodes the alpha-1A subunit of the P/Q-type voltage-gated calcium channel, which regulates calcium influx into neurons.
            
            \item Step 3: This mutation likely alters the structure or function of the calcium channel protein, potentially affecting its activation voltage, inactivation kinetics, or ion selectivity.
            
            \item Step 4: Altered calcium channel function disrupts intracellular calcium homeostasis, leading to abnormal calcium concentrations in the cytoplasm and endoplasmic reticulum.
            
            \item Step 5: Dysregulated calcium signaling affects multiple downstream cellular processes including neurotransmitter release, gene expression, and cell proliferation.
            
            \item Step 6: In cerebellar Purkinje cells, this calcium dysregulation particularly impacts synaptic transmission and neuronal excitability.
            
            \item Step 7: Chronic calcium imbalance triggers cellular stress responses, mitochondrial dysfunction, and impaired autophagy.
            
            \item Step 8: These cellular stresses lead to progressive Purkinje cell degeneration and eventual neurodegeneration in the cerebellum.
            
            \item Step 9: The cerebellum is critical for motor coordination and balance; loss of Purkinje cells results in spinocerebellar ataxia symptoms.
            
            \item Step 10: This specific mutation pattern is associated with Spinocerebellar Ataxia Type 14 (SCA14), an autosomal dominant form of the disease.
            
            \item This C>A/G>C mutation in the CACNA1A gene likely alters the function of the calcium voltage-gated channel, disrupting calcium homeostasis in cerebellar neurons. This dysregulation contributes to Spinocerebellar Ataxia type 14 (SCA14), an autosomal dominant neurodegenerative disorder characterized by progressive loss of balance and motor coordination due to cerebellar dysfunction.
            
            Answer: spinocerebellar ataxia type 14 (scna14) 
        \end{itemize}
    
    \item \textbf{One Tokenzier:}
        \begin{itemize}
            \item Step 1: The variant is a C>T substitution at position 13257473 on chromosome 19, occurring in the CACNA1A gene.
            \item Step 2: CACNA1A encodes the alpha-1A subunit of the P/Q-type voltage-gated calcium channel, which is highly expressed in cerebellar Purkinje cells and regulates calcium influx into neurons.
            \item Step 3: This mutation likely alters the protein structure or function of the calcium channel, potentially affecting voltage sensing, channel gating, or ion selectivity.
            \item Step 4: Altered calcium channel function disrupts intracellular calcium homeostasis, leading to dysregulated calcium influx into neurons.
            \item Step 5: Abnormal calcium signaling affects multiple downstream cellular processes including neurotransmitter release, gene expression, and cellular metabolism.
            \item Step 6: In cerebellar Purkinje cells, this calcium dysregulation leads to cellular stress and eventual neurodegeneration.
            \item Step 7: The cerebellum is critical for motor coordination, balance, and fine motor control.
            \item Step 8: Progressive cerebellar dysfunction results in the clinical manifestations of spinocerebellar ataxia.
            \item Step 9: The autosomal dominant inheritance pattern of this CACNA1A mutation means that a single mutated copy of the gene is sufficient to cause disease.
            \item Step 10: Patients develop progressive ataxia (loss of coordination), dysarthria (slurred speech), nystagmus (involuntary eye movements), and other cerebellar signs.
            \item Step 11: This mutation is associated with Spinocerebellar Ataxia type 6 (SCA6), one of the autosomal dominant spinocerebellar ataxias described in the disease association.
            \item This C>T mutation in CACNA1A likely alters the function of the P/Q-type calcium channel, disrupting calcium homeostasis in cerebellar neurons. This dysregulation contributes to spinocerebellar ataxia (SCA6), characterized by progressive cerebellar degeneration, loss of motor coordination, and balance problems.
            
            Anwser: spinocerebellar ataxia   

        \end{itemize}
\end{itemize}

\subsection{Broader Impact}
\label{subsec:broader_impact}
This paper studies the representation gap between modular and native multimodal large language models (MLLMs) from both theoretical and empirical perspectives. Our findings suggest that native architectures may facilitate more coherent cross-modal representations compared to modular designs. This insight could help guide the development of more efficient multimodal models that rely less on complex post-hoc alignment mechanisms, potentially improving training efficiency and reducing computational overhead. From a methodological perspective, we introduce a DNA-text setting as a controlled testbed for analyzing multimodal representation dynamics. While this setting is used for experimental validation, the underlying insights are not specific to biological data and may generalize to other discrete multimodal domains.

As with prior work on large language models, this research carries potential risks. Training large-scale models can contribute to environmental impact due to high computational requirements. In addition, models trained on large corpora may inherit and amplify biases present in the data. Although our experiments involve DNA sequences, the models presented in this work are designed for theoretical analysis rather than practical biological deployment. Nevertheless, we acknowledge that advances in generative modeling of biological sequences may raise dual-use concerns, such as the generation of inaccurate or misleading outputs in sensitive applications. To mitigate such risks, we emphasize that any real-world deployment of multimodal models in the life sciences should include appropriate safeguards, such as domain-specific validation, expert oversight, and, where applicable, experimental verification. More broadly, we advocate for responsible development practices, including transparency, evaluation on safety-critical tasks, and consideration of environmental efficiency. Overall, we hope this work encourages further research into multimodal architectures that are both effective and resource-efficient, while maintaining awareness of their broader societal implications.

\newpage
\section*{NeurIPS Paper Checklist}

\begin{enumerate}

\item {\bf Claims}
    \item[] Question: Do the main claims made in the abstract and introduction accurately reflect the paper's contributions and scope?
    \item[] Answer: \answerYes{} 
    \item[] Justification: Our paper analyzes the domain gap in multimodal large language models, from both theoretical and empirical perspectives. The abstract and introduction clearly state the claims and contributions. 
    \item[] Guidelines:
    \begin{itemize}
        \item The answer \answerNA{} means that the abstract and introduction do not include the claims made in the paper.
        \item The abstract and/or introduction should clearly state the claims made, including the contributions made in the paper and important assumptions and limitations. A \answerNo{} or \answerNA{} answer to this question will not be perceived well by the reviewers. 
        \item The claims made should match theoretical and experimental results, and reflect how much the results can be expected to generalize to other settings. 
        \item It is fine to include aspirational goals as motivation as long as it is clear that these goals are not attained by the paper. 
    \end{itemize}

\item {\bf Limitations}
    \item[] Question: Does the paper discuss the limitations of the work performed by the authors?
    \item[] Answer: \answerYes{}
    \item[] Justification: We discussed the limitations at the end of this paper in Sec.\ref{sec:conclusion}.
    \item[] Guidelines:
    \begin{itemize}
        \item The answer \answerNA{} means that the paper has no limitation while the answer \answerNo{} means that the paper has limitations, but those are not discussed in the paper. 
        \item The authors are encouraged to create a separate ``Limitations'' section in their paper.
        \item The paper should point out any strong assumptions and how robust the results are to violations of these assumptions (e.g., independence assumptions, noiseless settings, model well-specification, asymptotic approximations only holding locally). The authors should reflect on how these assumptions might be violated in practice and what the implications would be.
        \item The authors should reflect on the scope of the claims made, e.g., if the approach was only tested on a few datasets or with a few runs. In general, empirical results often depend on implicit assumptions, which should be articulated.
        \item The authors should reflect on the factors that influence the performance of the approach. For example, a facial recognition algorithm may perform poorly when image resolution is low or images are taken in low lighting. Or a speech-to-text system might not be used reliably to provide closed captions for online lectures because it fails to handle technical jargon.
        \item The authors should discuss the computational efficiency of the proposed algorithms and how they scale with dataset size.
        \item If applicable, the authors should discuss possible limitations of their approach to address problems of privacy and fairness.
        \item While the authors might fear that complete honesty about limitations might be used by reviewers as grounds for rejection, a worse outcome might be that reviewers discover limitations that aren't acknowledged in the paper. The authors should use their best judgment and recognize that individual actions in favor of transparency play an important role in developing norms that preserve the integrity of the community. Reviewers will be specifically instructed to not penalize honesty concerning limitations.
    \end{itemize}

\item {\bf Theory assumptions and proofs}
    \item[] Question: For each theoretical result, does the paper provide the full set of assumptions and a complete (and correct) proof?
    \item[] Answer: \answerYes{} 
    \item[] Justification: We provided the assumptions in the main paper and a complete proof in the appendix. 
    \item[] Guidelines:
    \begin{itemize}
        \item The answer \answerNA{} means that the paper does not include theoretical results. 
        \item All the theorems, formulas, and proofs in the paper should be numbered and cross-referenced.
        \item All assumptions should be clearly stated or referenced in the statement of any theorems.
        \item The proofs can either appear in the main paper or the supplemental material, but if they appear in the supplemental material, the authors are encouraged to provide a short proof sketch to provide intuition. 
        \item Inversely, any informal proof provided in the core of the paper should be complemented by formal proofs provided in appendix or supplemental material.
        \item Theorems and Lemmas that the proof relies upon should be properly referenced. 
    \end{itemize}

    \item {\bf Experimental result reproducibility}
    \item[] Question: Does the paper fully disclose all the information needed to reproduce the main experimental results of the paper to the extent that it affects the main claims and/or conclusions of the paper (regardless of whether the code and data are provided or not)?
    \item[] Answer: \answerYes{} 
    \item[] Justification: We provided the implementation details in Sec.\ref{sec:experiments}. 
    \item[] Guidelines:
    \begin{itemize}
        \item The answer \answerNA{} means that the paper does not include experiments.
        \item If the paper includes experiments, a \answerNo{} answer to this question will not be perceived well by the reviewers: Making the paper reproducible is important, regardless of whether the code and data are provided or not.
        \item If the contribution is a dataset and\slash or model, the authors should describe the steps taken to make their results reproducible or verifiable. 
        \item Depending on the contribution, reproducibility can be accomplished in various ways. For example, if the contribution is a novel architecture, describing the architecture fully might suffice, or if the contribution is a specific model and empirical evaluation, it may be necessary to either make it possible for others to replicate the model with the same dataset, or provide access to the model. In general. releasing code and data is often one good way to accomplish this, but reproducibility can also be provided via detailed instructions for how to replicate the results, access to a hosted model (e.g., in the case of a large language model), releasing of a model checkpoint, or other means that are appropriate to the research performed.
        \item While NeurIPS does not require releasing code, the conference does require all submissions to provide some reasonable avenue for reproducibility, which may depend on the nature of the contribution. For example
        \begin{enumerate}
            \item If the contribution is primarily a new algorithm, the paper should make it clear how to reproduce that algorithm.
            \item If the contribution is primarily a new model architecture, the paper should describe the architecture clearly and fully.
            \item If the contribution is a new model (e.g., a large language model), then there should either be a way to access this model for reproducing the results or a way to reproduce the model (e.g., with an open-source dataset or instructions for how to construct the dataset).
            \item We recognize that reproducibility may be tricky in some cases, in which case authors are welcome to describe the particular way they provide for reproducibility. In the case of closed-source models, it may be that access to the model is limited in some way (e.g., to registered users), but it should be possible for other researchers to have some path to reproducing or verifying the results.
        \end{enumerate}
    \end{itemize}

\item {\bf Open access to data and code}
    \item[] Question: Does the paper provide open access to the data and code, with sufficient instructions to faithfully reproduce the main experimental results, as described in supplemental material?
    \item[] Answer: \answerYes{} 
    \item[] Justification: The datasets are publicly available. We are constructing a website for this work and will release open-source code. 
    \item[] Guidelines:
    \begin{itemize}
        \item The answer \answerNA{} means that paper does not include experiments requiring code.
        \item Please see the NeurIPS code and data submission guidelines (\url{https://neurips.cc/public/guides/CodeSubmissionPolicy}) for more details.
        \item While we encourage the release of code and data, we understand that this might not be possible, so \answerNo{} is an acceptable answer. Papers cannot be rejected simply for not including code, unless this is central to the contribution (e.g., for a new open-source benchmark).
        \item The instructions should contain the exact command and environment needed to run to reproduce the results. See the NeurIPS code and data submission guidelines (\url{https://neurips.cc/public/guides/CodeSubmissionPolicy}) for more details.
        \item The authors should provide instructions on data access and preparation, including how to access the raw data, preprocessed data, intermediate data, and generated data, etc.
        \item The authors should provide scripts to reproduce all experimental results for the new proposed method and baselines. If only a subset of experiments are reproducible, they should state which ones are omitted from the script and why.
        \item At submission time, to preserve anonymity, the authors should release anonymized versions (if applicable).
        \item Providing as much information as possible in supplemental material (appended to the paper) is recommended, but including URLs to data and code is permitted.
    \end{itemize}

\item {\bf Experimental setting/details}
    \item[] Question: Does the paper specify all the training and test details (e.g., data splits, hyperparameters, how they were chosen, type of optimizer) necessary to understand the results?
    \item[] Answer: \answerYes{} 
    \item[] Justification: We provided all these details in the implementation details in Sec.\ref{sec:experiments}. 
    \item[] Guidelines:
    \begin{itemize}
        \item The answer \answerNA{} means that the paper does not include experiments.
        \item The experimental setting should be presented in the core of the paper to a level of detail that is necessary to appreciate the results and make sense of them.
        \item The full details can be provided either with the code, in appendix, or as supplemental material.
    \end{itemize}

\item {\bf Experiment statistical significance}
    \item[] Question: Does the paper report error bars suitably and correctly defined or other appropriate information about the statistical significance of the experiments?
    \item[] Answer: \answerNo{} 
    \item[] Justification: We follow current methods in the literature and report the numbers to indicate one method is better than the other. 
    \item[] Guidelines:
    \begin{itemize}
        \item The answer \answerNA{} means that the paper does not include experiments.
        \item The authors should answer \answerYes{} if the results are accompanied by error bars, confidence intervals, or statistical significance tests, at least for the experiments that support the main claims of the paper.
        \item The factors of variability that the error bars are capturing should be clearly stated (for example, train/test split, initialization, random drawing of some parameter, or overall run with given experimental conditions).
        \item The method for calculating the error bars should be explained (closed form formula, call to a library function, bootstrap, etc.)
        \item The assumptions made should be given (e.g., Normally distributed errors).
        \item It should be clear whether the error bar is the standard deviation or the standard error of the mean.
        \item It is OK to report 1-sigma error bars, but one should state it. The authors should preferably report a 2-sigma error bar than state that they have a 96\% CI, if the hypothesis of Normality of errors is not verified.
        \item For asymmetric distributions, the authors should be careful not to show in tables or figures symmetric error bars that would yield results that are out of range (e.g., negative error rates).
        \item If error bars are reported in tables or plots, the authors should explain in the text how they were calculated and reference the corresponding figures or tables in the text.
    \end{itemize}

\item {\bf Experiments compute resources}
    \item[] Question: For each experiment, does the paper provide sufficient information on the computer resources (type of compute workers, memory, time of execution) needed to reproduce the experiments?
    \item[] Answer: \answerYes{} 
    \item[] Justification: See the implementation details in Sec.\ref{sec:experiments}. We will release our source code. 
    \item[] Guidelines:
    \begin{itemize}
        \item The answer \answerNA{} means that the paper does not include experiments.
        \item The paper should indicate the type of compute workers CPU or GPU, internal cluster, or cloud provider, including relevant memory and storage.
        \item The paper should provide the amount of compute required for each of the individual experimental runs as well as estimate the total compute. 
        \item The paper should disclose whether the full research project required more compute than the experiments reported in the paper (e.g., preliminary or failed experiments that didn't make it into the paper). 
    \end{itemize}
    
\item {\bf Code of ethics}
    \item[] Question: Does the research conducted in the paper conform, in every respect, with the NeurIPS Code of Ethics \url{https://neurips.cc/public/EthicsGuidelines}?
    \item[] Answer: \answerYes{} 
    \item[] Justification: We reviewed the NeurIPS Code of Ethics carefully and conformed to it in every aspect. 
    \item[] Guidelines:
    \begin{itemize}
        \item The answer \answerNA{} means that the authors have not reviewed the NeurIPS Code of Ethics.
        \item If the authors answer \answerNo, they should explain the special circumstances that require a deviation from the Code of Ethics.
        \item The authors should make sure to preserve anonymity (e.g., if there is a special consideration due to laws or regulations in their jurisdiction).
    \end{itemize}

\item {\bf Broader impacts}
    \item[] Question: Does the paper discuss both potential positive societal impacts and negative societal impacts of the work performed?
    \item[] Answer: \answerYes{} 
    \item[] Justification: We discussed broader impacts in the appendix. 
    \item[] Guidelines:
    \begin{itemize}
        \item The answer \answerNA{} means that there is no societal impact of the work performed.
        \item If the authors answer \answerNA{} or \answerNo, they should explain why their work has no societal impact or why the paper does not address societal impact.
        \item Examples of negative societal impacts include potential malicious or unintended uses (e.g., disinformation, generating fake profiles, surveillance), fairness considerations (e.g., deployment of technologies that could make decisions that unfairly impact specific groups), privacy considerations, and security considerations.
        \item The conference expects that many papers will be foundational research and not tied to particular applications, let alone deployments. However, if there is a direct path to any negative applications, the authors should point it out. For example, it is legitimate to point out that an improvement in the quality of generative models could be used to generate Deepfakes for disinformation. On the other hand, it is not needed to point out that a generic algorithm for optimizing neural networks could enable people to train models that generate Deepfakes faster.
        \item The authors should consider possible harms that could arise when the technology is being used as intended and functioning correctly, harms that could arise when the technology is being used as intended but gives incorrect results, and harms following from (intentional or unintentional) misuse of the technology.
        \item If there are negative societal impacts, the authors could also discuss possible mitigation strategies (e.g., gated release of models, providing defenses in addition to attacks, mechanisms for monitoring misuse, mechanisms to monitor how a system learns from feedback over time, improving the efficiency and accessibility of ML).
    \end{itemize}
    
\item {\bf Safeguards}
    \item[] Question: Does the paper describe safeguards that have been put in place for responsible release of data or models that have a high risk for misuse (e.g., pre-trained language models, image generators, or scraped datasets)?
    \item[] Answer: \answerNA{}
    \item[] Justification: We do not envision misuse scenario by our model or data. 
    \item[] Guidelines:
    \begin{itemize}
        \item The answer \answerNA{} means that the paper poses no such risks.
        \item Released models that have a high risk for misuse or dual-use should be released with necessary safeguards to allow for controlled use of the model, for example by requiring that users adhere to usage guidelines or restrictions to access the model or implementing safety filters. 
        \item Datasets that have been scraped from the Internet could pose safety risks. The authors should describe how they avoided releasing unsafe images.
        \item We recognize that providing effective safeguards is challenging, and many papers do not require this, but we encourage authors to take this into account and make a best faith effort.
    \end{itemize}

\item {\bf Licenses for existing assets}
    \item[] Question: Are the creators or original owners of assets (e.g., code, data, models), used in the paper, properly credited and are the license and terms of use explicitly mentioned and properly respected?
    \item[] Answer: \answerYes{} 
    \item[] Justification: We used multiple open-sourced Github repositories that are free to use for non-commercial research purposes. 
    \item[] Guidelines:
    \begin{itemize}
        \item The answer \answerNA{} means that the paper does not use existing assets.
        \item The authors should cite the original paper that produced the code package or dataset.
        \item The authors should state which version of the asset is used and, if possible, include a URL.
        \item The name of the license (e.g., CC-BY 4.0) should be included for each asset.
        \item For scraped data from a particular source (e.g., website), the copyright and terms of service of that source should be provided.
        \item If assets are released, the license, copyright information, and terms of use in the package should be provided. For popular datasets, \url{paperswithcode.com/datasets} has curated licenses for some datasets. Their licensing guide can help determine the license of a dataset.
        \item For existing datasets that are re-packaged, both the original license and the license of the derived asset (if it has changed) should be provided.
        \item If this information is not available online, the authors are encouraged to reach out to the asset's creators.
    \end{itemize}

\item {\bf New assets}
    \item[] Question: Are new assets introduced in the paper well documented and is the documentation provided alongside the assets?
    \item[] Answer: \answerNo{} 
    \item[] Justification: We use publicly available benchmarks in experiments, and we do not introduce any new assets. 
    \item[] Guidelines:
    \begin{itemize}
        \item The answer \answerNA{} means that the paper does not release new assets.
        \item Researchers should communicate the details of the dataset\slash code\slash model as part of their submissions via structured templates. This includes details about training, license, limitations, etc. 
        \item The paper should discuss whether and how consent was obtained from people whose asset is used.
        \item At submission time, remember to anonymize your assets (if applicable). You can either create an anonymized URL or include an anonymized zip file.
    \end{itemize}

\item {\bf Crowdsourcing and research with human subjects}
    \item[] Question: For crowdsourcing experiments and research with human subjects, does the paper include the full text of instructions given to participants and screenshots, if applicable, as well as details about compensation (if any)? 
    \item[] Answer: \answerNA{} 
    \item[] Justification: Our paper does not involve crowdsourcing nor research with human subjects. 
    \item[] Guidelines:
    \begin{itemize}
        \item The answer \answerNA{} means that the paper does not involve crowdsourcing nor research with human subjects.
        \item Including this information in the supplemental material is fine, but if the main contribution of the paper involves human subjects, then as much detail as possible should be included in the main paper. 
        \item According to the NeurIPS Code of Ethics, workers involved in data collection, curation, or other labor should be paid at least the minimum wage in the country of the data collector. 
    \end{itemize}

\item {\bf Institutional review board (IRB) approvals or equivalent for research with human subjects}
    \item[] Question: Does the paper describe potential risks incurred by study participants, whether such risks were disclosed to the subjects, and whether Institutional Review Board (IRB) approvals (or an equivalent approval/review based on the requirements of your country or institution) were obtained?
    \item[] Answer:\answerNA{} 
    \item[] Justification: Our paper does not involve crowdsourcing nor research with human subjects.
    \item[] Guidelines:
    \begin{itemize}
        \item The answer \answerNA{} means that the paper does not involve crowdsourcing nor research with human subjects.
        \item Depending on the country in which research is conducted, IRB approval (or equivalent) may be required for any human subjects research. If you obtained IRB approval, you should clearly state this in the paper. 
        \item We recognize that the procedures for this may vary significantly between institutions and locations, and we expect authors to adhere to the NeurIPS Code of Ethics and the guidelines for their institution. 
        \item For initial submissions, do not include any information that would break anonymity (if applicable), such as the institution conducting the review.
    \end{itemize}

\item {\bf Declaration of LLM usage}
    \item[] Question: Does the paper describe the usage of LLMs if it is an important, original, or non-standard component of the core methods in this research? Note that if the LLM is used only for writing, editing, or formatting purposes and does \emph{not} impact the core methodology, scientific rigor, or originality of the research, declaration is not required.
    \item[] Answer: \answerYes{} 
    \item[] Justification: We used the LLM-as-a-Judge protocol in evaluating the reasoning ability of different methods, which are described in Sec.\ref{sec:experiments} in detail.
    \item[] Guidelines:
    \begin{itemize}
        \item The answer \answerNA{} means that the core method development in this research does not involve LLMs as any important, original, or non-standard components.
        \item Please refer to our LLM policy in the NeurIPS handbook for what should or should not be described.
    \end{itemize}

\end{enumerate}

\end{document}